\newif\ifdoublecol
\definecolor{col1}{HTML}{3891A6}
\definecolor{col2}{HTML}{EF5B5B}
\definecolor{col3}{HTML}{3DDC97}
\pgfplotsset{compat=1.15}
\newtheorem{remark}{Remark}
\newtheorem{definition}{Definition}
\newlength{\Oldarrayrulewidth}
\definecolor{intnull}{RGB}{213,229,255}
\definecolor{inteins}{RGB}{128,179,255}
\definecolor{intzwei}{RGB}{42,127,255}
\definecolor{intdrei}{RGB}{0,85,212}
\definecolor{intvier}{RGB}{0,51,128}
\definecolor{intfunf}{RGB}{0,34,85}
\newtheorem{lemma}{Lemma}
\newcommand{\herm}{^{\mbox{\scriptsize H}}}
\newcommand{\vbar}{\raisebox{.17ex}{\rule{.04em}{1.35ex}}}
\newcommand{\vbarind}{\raisebox{.01ex}{\rule{.04em}{1.1ex}}}
\newcommand{\R}{\ifmmode{\rm I}\hspace{-.2em}{\rm R} \else ${\rm I}\hspace{-.2em}{\rm R}$ \fi}
\newcommand{\T}{\ifmmode{\rm I}\hspace{-.2em}{\rm T} \else ${\rm I}\hspace{-.2em}{\rm T}$ \fi}
\newcommand{\N}{\ifmmode{\rm I}\hspace{-.2em}{\rm N} \else \mbox{${\rm I}\hspace{-.2em}{\rm N}$} \fi}
\newcommand{\B}{\ifmmode{\rm I}\hspace{-.2em}{\rm B} \else \mbox{${\rm I}\hspace{-.2em}{\rm B}$} \fi}
\newcommand{\Hil}{\ifmmode{\rm I}\hspace{-.2em}{\rm H} \else \mbox{${\rm I}\hspace{-.2em}{\rm H}$} \fi}
\newcommand{\C}{\ifmmode\hspace{.2em}\vbar\hspace{-.31em}{\rm C} \else \mbox{$\hspace{.2em}\vbar\hspace{-.31em}{\rm C}$} \fi}
\newcommand{\Cind}{\ifmmode\hspace{.2em}\vbarind\hspace{-.25em}{\rm C} \else \mbox{$\hspace{.2em}\vbarind\hspace{-.25em}{\rm C}$} \fi}
\newcommand{\Q}{\ifmmode\hspace{.2em}\vbar\hspace{-.31em}{\rm Q} \else \mbox{$\hspace{.2em}\vbar\hspace{-.31em}{\rm Q}$} \fi}
\newcommand{\Z}{\ifmmode{\rm Z}\hspace{-.28em}{\rm Z} \else ${\rm Z}\hspace{-.28em}{\rm Z}$ \fi}
\newtheorem{exmp}{Example}
\theoremstyle{definition}
\newcommand{\CA}[0]{{\mathcal{A}}}
\newcommand{\CB}[0]{{\mathcal{B}}}
\newcommand{\CF}[0]{{\mathcal{F}}}
\newcommand{\CI}[0]{{\mathcal{I}}}
\newcommand{\CK}[0]{{\mathcal{K}}}
\newcommand{\CM}[0]{{\mathcal{M}}}
\newcommand{\CN}[0]{{\mathcal{N}}}
\newcommand{\CQ}[0]{{\mathcal{Q}}}
\newcommand{\CS}[0]{{\mathcal{S}}}
\newcommand{\CU}[0]{{\mathcal{U}}}
\newcommand{\Bb}[0]{{\mathbf{b}}}
\newcommand{\Bd}[0]{{\mathbf{d}}}
\newcommand{\Bh}[0]{{\mathbf{h}}}
\newcommand{\Bq}[0]{{\mathbf{q}}}
\newcommand{\Bv}[0]{{\mathbf{v}}}
\newcommand{\Bx}[0]{{\mathbf{x}}}
\newcommand{\By}[0]{{\mathbf{y}}}
\newcommand{\BA}[0]{{\mathbf{A}}}
\newcommand{\BG}[0]{{\mathbf{G}}}
\newcommand{\BI}[0]{{\mathbf{I}}}
\newcommand{\BP}[0]{{\mathbf{P}}}
\newcommand{\BQ}[0]{{\mathbf{Q}}}
\DeclareAcronym{ADMM}{
    short = ADMM,
    long = alternating direction method of multipliers,
    list = Alternating Direction Method of Multipliers,
    tag = abbrev
}
\DeclareAcronym{AoA}{
    short = AoA,
    long = angle-of-arrival,
    list = Angle-of-Arrival,
    tag = abbrev
}
\DeclareAcronym{SISO}{
    short = SISO,
    long = single-input single-output,
    list = single-input single-output,
    tag = abbrev
}
\DeclareAcronym{MRT}{
    short = MRT,
    long = maximum ratio transmitter,
    list = maximum ratio transmitter,
    tag = abbrev
}
\DeclareAcronym{PDA}{
    short = PDA,
    long = placement delivery array,
    list = placement delivery array,
    tag = abbrev
}
\DeclareAcronym{EE}{
    short = EE,
    long = energy efficiency,
    list = energy efficiency,
    tag = abbrev
}
\DeclareAcronym{MDS}{
    short = MDS,
    long = maximum distance separation,
    list = maximum distance separation,
    tag = abbrev
}
\DeclareAcronym{SIC}{
    short = SIC,
    long = successive-interference-cancellation,
    list = successive-interference-cancellation,
    tag = abbrev
}
\DeclareAcronym{MAC}{
    short = MAC,
    long = multiple-access-channel,
    list = multiple-access-channel,
    tag = abbrev
}
\DeclareAcronym{AoD}{
    short = AoD,
    long = angle-of-departure,
    list = Angle-of-Departure,
    tag = abbrev
}
\DeclareAcronym{BB}{
    short = BB,
    long = base band,
    list = Base Band,
    tag = abbrev
}
\DeclareAcronym{BC}{
    short = BC,
    long = broadcast channel,
    list = Broadcast Channel,
    tag = abbrev
}
\DeclareAcronym{BS}{
    short = BS,
    long = base station,
    list = Base Station,
    tag = abbrev
}
\DeclareAcronym{BR}{
    short = BR,
    long = best response,
    list = Best Response, 
    tag = abbrev
}
\DeclareAcronym{CB}{
    short = CB,
    long = coordinated beamforming,
    list = Coordinated Beamforming,
    tag = abbrev
}
\DeclareAcronym{CC}{
    short = CC,
    long = coded caching,
    list = Coded Caching,
    tag = abbrev
}
\DeclareAcronym{CE}{
    short = CE,
    long = channel estimation,
    list = Channel Estimation,
    tag = abbrev
}
\DeclareAcronym{CoMP}{
    short = CoMP,
    long = coordinated multi-point transmission,
    list = Coordinated Multi-Point Transmission,
    tag = abbrev
}
\DeclareAcronym{CRAN}{
    short = C-RAN,
    long = cloud radio access network,
    list = Cloud Radio Access Network,
    tag = abbrev
}
\DeclareAcronym{CSE}{
    short = CSE,
    long = channel specific estimation,
    list = Channel Specific Estimation,
    tag = abbrev
}
\DeclareAcronym{CSI}{
    short = CSI,
    long = channel state information,
    list = Channel State Information,
    tag = abbrev
}
\DeclareAcronym{CSIT}{
    short = CSIT,
    long = channel state information at the transmitter,
    list = Channel State Information at the Transmitter,
    tag = abbrev
}
\DeclareAcronym{CU}{
    short = CU,
    long = central unit,
    list = Central Unit,
    tag = abbrev
}
\DeclareAcronym{D2D}{
    short = D2D,
    long = device-to-device,
    list = Device-to-Device,
    tag = abbrev
}
\DeclareAcronym{DE-ADMM}{
    short = DE-ADMM,
    long = direct estimation with alternating direction method of multipliers,
    list = Direct Estimation with Alternating Direction Method of Multipliers,
    tag = abbrev
}
\DeclareAcronym{DE-BR}{
    short = DE-BR,
    long = direct estimation with best response,
    list = Direct Estimation with Best Response,
    tag = abbrev
}
\DeclareAcronym{DE-SG}{
    short = DE-SG,
    long = direct estimation with stochastic gradient,
    list = Direct Estimation with Stochastic Gradient,
    tag = abbrev
}
\DeclareAcronym{DFT}{
	short = DFT,
	long = discrete fourier transform,
	list = Discrete Fourier Transform,
	tag = abbrev
}
\DeclareAcronym{DoF}{
    short = DoF,
    long = degrees of freedom,
    list = Degrees of Freedom,
    tag = abbrev
}
\DeclareAcronym{DL}{
    short = DL,
    long = downlink,
    list = Downlink,
    tag = abbrev
}
\DeclareAcronym{GD}{
	short = GD, 
	long = gradient descent,
	list = Gradeitn Descent,
	tag = abbrev
}
\DeclareAcronym{IBC}{
    short = IBC,
    long = interfering broadcast channel,
    list = Interfering Broadcast Channel,
    tag = abbrev
}
\DeclareAcronym{i.i.d.}{
    short = i.i.d.,
    long = independent and identically distributed,
    list = Independent and Identically Distributed,
    tag = abbrev
}
\DeclareAcronym{JP}{
    short = JP,
    long = joint processing,
    list = Joint Processing,
    tag = abbrev
}
\DeclareAcronym{KKT}{
    short = KKT,
    long = Karush-Kuhn-Tucker,
    tag = abbrev
}
\DeclareAcronym{LOS}{
	short = LOS,
	long = line-of-sight,
	list = Line-of-Sight,
	tag = abbrev
}
\DeclareAcronym{LS}{
    short = LS,
    long = least squares,
    list = Least Squares,
    tag = abbrev
}
\DeclareAcronym{LTE}{
    short = LTE,
    long = Long Term Evolution,
    tag = abbrev
}
\DeclareAcronym{LTE-A}{
    short = LTE-A,
    long = Long Term Evolution Advanced,
    tag = abbrev
}
\DeclareAcronym{MIMO}{
    short = MIMO,
    long = multiple-input multiple-output,
    list = Multiple-Input Multiple-Output,
    tag = abbrev
}
\DeclareAcronym{MISO}{
    short = MISO,
    long = multiple-input single-output,
    list = Multiple-Input Single-Output,
    tag = abbrev
}
\DeclareAcronym{MSE}{
    short = MSE,
    long = mean-squared error,
    list = Mean-Squared Error,
    tag = abbrev
}
\DeclareAcronym{MMSE}{
    short = MMSE,
    long = minimum mean-squared error,
    list = Minimum Mean-Squared Error,
    tag = abbrev
}
\DeclareAcronym{mmWave}{
	short = mmWave,
	long = millimeter wave,
	list = Millimeter Wave,
	tag = abbrev
}
\DeclareAcronym{MU-MIMO}{
    short = MU-MIMO,
    long = multi-user \ac{MIMO},
    list = Multi-User \ac{MIMO},
    tag = abbrev
}
\DeclareAcronym{OTA}{
    short = OTA,
    long = over-the-air,
    list = Over-the-Air,
    tag = abbrev
}
\DeclareAcronym{PSD}{
    short = PSD,
    long = positive semidefinite,
    list = Positive Semidefinite,
    tag = abbrev
}
\DeclareAcronym{QoS}{
	short = QoS,
	long = quality of service,
	list = Quality of Service,
	tag = abbrev
}
\DeclareAcronym{RCP}{
	short = RCP,
	long = remote central processor,
	list = Remote Central Processor,
	tag = abbrev
}
\DeclareAcronym{RRH}{
    short = RRH,
    long = remote radio head,
    list = Remote Radio Head,
    tag = abbrev
}
\DeclareAcronym{RSSI}{
    short = RSSI,
    long = received signal strength indicator,
    list = Received Signal Strength Indicator,
    tag = abbrev
}
\DeclareAcronym{RX}{
	short = RX,
	long = receiver,
	list = Receiver,
	tag = abbrev
}
\DeclareAcronym{SCA}{
    short = SCA,
    long = successive convex approximation,
    list = Successive Convex Approximation,
    tag = abbrev
}
\DeclareAcronym{SG}{
    short = SG,
    long = stochastic gradient,
    list = Stochastic Gradient,
    tag = abbrev
}
\DeclareAcronym{SNR}{
    short = SNR,
    long = signal-to-noise ratio,
    list = Signal-to-Noise Ratio,
    tag = abbrev
}
\DeclareAcronym{SINR}{
    short = SINR,
    long = signal-to-interference-plus-noise ratio,
    list = Signal-to-Interference-plus-Noise Ratio,
    tag = abbrev
}
\DeclareAcronym{SOCP}{
	short = SOCP, 
	long = second order cone program,
	list = Second Order Cone Program,
	tag = abbrev
}
\DeclareAcronym{SSE}{
    short = SSE,
    long = stream specific estimation,
    list = Stream Specific Estimation,
    tag = abbrev
}
\DeclareAcronym{SVD}{
	short = SVD,
	long = singular value decomposition,
	list = Singular Value Decomposition,
	tag = abbrev
}
\DeclareAcronym{TDD}{
	short = TDD,
	long = time division duplex,
	list = Time Division Duplex,
	tag = abbrev
}
\DeclareAcronym{TX}{
	short = TX,
	long = transmitter,
	list = Transmitter,
	tag = abbrev
}
\DeclareAcronym{UE}{
    short = UE,
    long = user equipment,
    list = User Equipment,
    tag = abbrev
}
\DeclareAcronym{UL}{
    short = UL,
    long = uplink,
    list = Uplink,
    tag = abbrev
}
\DeclareAcronym{ULA}{
	short = ULA,
	long = uniform linear array,
	list = Uniform Linear Array,
	tag = abbrev
}
\DeclareAcronym{UPA}{
    short = UPA,
    long = uniform planar array,
    list = Uniform Planar Array,
    tag = abbrev
}
\DeclareAcronym{WMMSE}{
    short = WMMSE,
    long = weighted minimum mean-squared error,
    list = Weighted Minimum Mean-Squared Error,
    tag = abbrev
}
\DeclareAcronym{WMSEMin}{
    short = WMSEMin,
    long = weighted sum \ac{MSE} minimization,
    list = Weighted sum \ac{MSE} Minimization,
    tag = abbrev
}
\DeclareAcronym{WBAN}{
	short = WBAN,
	long = wireless body area network,
	list = Wireless Body Area Network,
	tag = abbrev
}
\DeclareAcronym{WSRMax}{
    short = WSRMax,
    long = weighted sum rate maximization,
    list = Weighted Sum Rate Maximization,
    tag = abbrev
}
\begin{document}

\title{Low-Complexity Multi-Antenna Coded Caching Using Location-Aware Placement Delivery Arrays}

\author{\IEEEauthorblockN{Hamidreza Bakhshzad Mahmoodi,~\IEEEmembership{Student~Member,~IEEE,}  MohammadJavad Salehi,~\IEEEmembership{Member,~IEEE,} and Antti T\"olli,~\IEEEmembership{Senior~Member,~IEEE}}
\thanks{This work was supported by the Academy of Finland under grants no. 319059 (Coded Collaborative Caching for Wireless Energy Efficiency) and 346208 (6G Flagship program). This article was presented in parts at the IEEE International Conference on Communications (ICC) 2022, Seoul, Korea. \textit{(Corresponding~author: Hamidreza Bakhshzad Mahmoodi). } Hamidreza Bakhshzad Mahmoodi, MohammadJavad Salehi and Antti T\"olli are with the Centre for Wireless Communications, University of Oulu, FIN-90014 Oulu, Finland. (e-mail:First-Name.Last-Name@oulu.fi)}
}

\maketitle


\begin{abstract}
A location-aware multi-antenna coded caching scheme is proposed for applications with location-dependent data requests, such as wireless immersive experience, where users are immersed in a three-dimensional virtual world. The wireless connectivity conditions vary as the users move within the application area motivating the use of a non-uniform cache memory allocation process to avoid excessive delivery time for users located in wireless bottleneck areas. To this end, a location-aware placement and delivery array (LAPDA) is designed for cache-aided multiantenna data delivery with a fast converging, iterative linear beamforming process. The underlying weighted max-min transmit precoder design enables the proposed scheme to serve users in poor connectivity areas with smaller amounts of data while simultaneously delivering larger amounts to other users. Our new scheme is suitable for large networks due to its linear transceiver structure and it is not constrained by the number of users, cache size, or the number of antennas at the transmitter, unlike the existing schemes. Despite non-uniform cache placement, the proposed scheme still achieves a significant degree of coded caching gain that is additive to the multiplexing gain and greatly outperforms the conventional symmetric CC schemes in terms of both average and 95-percentile delivery time.

\begin{IEEEkeywords}
\noindent Multi-antenna communications, coded caching, location-dependent caching, immersive viewing. 
\end{IEEEkeywords}

\end{abstract}

\section{Introduction}
Mobile data traffic is exponentially growing, and this trend will continue as the market is constantly inundated with technologies and devices that support new data-intensive applications in different forms and capabilities~\cite{cisco2020}. Wireless eyewear devices, for example, enable data-intensive mobile extended reality (XR) applications~\cite{salehi2022enhancing, flashback_2016_VR_static_dynamic_support}, which are also subject to strict quality of service (QoS) requirements such as low latency ($ < 10$ ms) and high data rate transmission ($6.37 - 95.55$ Gbps)~\cite{Nokia-immersive,6G_white_paper_2020,bastug2017toward, taleb2022_towards_XR_vision, walid_sad_bennis_VR_XR_2022,wireless_virtual_reality_TCOM2018}. This differs greatly from conventional ultra-low latency and low-rate requirements for internet-of-things applications~\cite{URLLC_ICC_2017}. The low latency, along with the high delivery rates, require more sophisticated transmission methods than those offered by current wireless network standards~\cite{bastug2017toward, taleb2022_towards_XR_vision, walid_sad_bennis_VR_XR_2022}. Therefore, to meet the requirements of future wireless XR applications, new delivery 
schemes with higher bandwidth efficiency are needed. 

One possible option, given that
upcoming mobile broadband applications rely heavily on asynchronous content reuse~\cite{proactive_caching_Cm2016}, is to utilize proactive caching at the end-users to relieve network congestion and bandwidth consumption during peak times~\cite{role_of_caching_in_future_wireless_caire_JSAC2018}. In this regard, various studies have explored proactive caching in \ac{SISO} configurations, demonstrating its benefits for meeting XR application requirements~\cite{sun2019communications,yang2018communication,sun2020bandwidth,dang2019joint}. Specifically, with the available memory at the end users, the whole or part of the requested content can be cached beforehand and rendered by the end user at the request time. This results in significant bandwidth and delay-reduction gains and alleviates the traffic burden over the wireless network~\cite{sun2019communications,yang2018communication,sun2020bandwidth,dang2019joint}.  

Unlike conventional caching schemes that rely on the available memory of each user (see, e.g.,~\cite{proactive_caching_Cm2016, role_of_caching_in_future_wireless_caire_JSAC2018, sun2019communications,yang2018communication,sun2020bandwidth,dang2019joint, bastug2014living,yang2015analysis}), the \ac{CC} scheme originally introduced in~\cite{MaddahAli-2014} benefits from the aggregated memory throughout the network. In fact, it enjoys a so-called \textit{global caching} gain, available through careful cache placement and multicast transmissions that results in improved overall performance compared to traditional schemes~\cite{proactive_caching_Cm2016, role_of_caching_in_future_wireless_caire_JSAC2018, sun2019communications,yang2018communication,sun2020bandwidth,dang2019joint, bastug2014living,yang2015analysis}. As such, the transmission bandwidth for delay-constrained XR applications has been effectively reduced in \ac{SISO} setups by leveraging coded cache placement and mobile device computing capabilities~\cite{CC_edge_computing_for_VR_twc2021}. The \ac{CC} scheme is especially advantageous for large networks as the achievable global caching gain scales linearly with the number of users in the network. This makes it ideal for collaborative XR scenarios where a group of users is served simultaneously within a confined environment, with each user's individual actions impacting the results perceived by all users (c.f.,~\cite{Mahmoodi_immersive_isit2021, mahmoodi2022asymmetric,mahmoodi_ICC2022_nonsym, mahmoodi2022asymmetric_arxiv_TWC2023, salehi2022enhancing}). In this regard, a \emph{location-dependent} CC-based cache placement and delivery scheme, originally designed for \ac{SISO} setups, has been proposed in~\cite{Mahmoodi_immersive_isit2021}.

In addition, the \ac{CC} scheme's ability to combine global caching and spatial multiplexing gains is another critical feature~\cite{pooya-cc-physical-2019-journal}. This is particularly appealing given that multi-antenna connectivity will be a crucial feature of upcoming communication systems~\cite{6G_white_paper_2020}. Thus, the SISO setup in~\cite{Mahmoodi_immersive_isit2021} has been extended to a \ac{MISO} setup in~\cite{ mahmoodi2022asymmetric,mahmoodi_ICC2022_nonsym, mahmoodi2022asymmetric_arxiv_TWC2023}, to benefit from spatial multiplexing and global caching gains simultaneously. In this paper, we intend to overcome some of the practical limitations of our earlier schemes in~\cite{ mahmoodi2022asymmetric,mahmoodi_ICC2022_nonsym, mahmoodi2022asymmetric_arxiv_TWC2023}. Notably, we propose a new location-dependent \ac{CC} scheme, 
suitable for large networks due to its linear transceiver structure, and
not constrained by the number of users, cache size, or the number of antennas at the transmitter, unlike the existing schemes.



\subsection{Literature review}

\noindent\textbf{Coded caching.} The original \ac{CC} scheme in~\cite{MaddahAli-2014} was intended for \ac{SISO} setups with an error-free shared link. This work was later extended to more practical scenarios, including multi-server~\cite{Shariatpanahi2016} and \ac{MISO}~\cite{pooya-cc-physical-2019-journal,yu2017exact,tolli2017multi} setups. The early high \ac{SNR} analysis in~\cite{Shariatpanahi2016,pooya-cc-physical-2019-journal,yu2017exact} proved that the so-called \ac{DoF} achieved by the \ac{MISO}-\ac{CC} scheme is optimal under uncoded cache placement and single-shot data delivery. Later, the analysis in~\cite{tolli2017multi} showed that an optimized multi-antenna precoder design is necessary for the \ac{CC} scheme to perform well also in the low-SNR regime. Soon, device-to-device (D2D) \ac{CC} schemes were proposed (e.g.,~\cite{Ji2016,D2D-CC-Optload-memtradeof-caire-2019,D2D_CC_mahmoodi_2022}) to increase the network throughput. 

\noindent\textbf{Bit- and signal-level \ac{CC}.} Despite exciting theoretical gains, various practical issues have restricted the real-world implementation of \ac{CC} schemes. One prominent problem is the exponentially growing file-division requirement (w.r.t the network size), known as the \textit{subpacketization} bottleneck. To address this issue, a combinatorial subfile assignment based on placement delivery arrays (PDA) was proposed in~\cite{PDA_first_2017}.
The PDA structure provides a set of conditions (reviewed for \ac{MISO} systems in Section~\ref{sec:cache_placement}) that allows a given matrix to be used for both content placement and delivery of a \ac{CC} scheme, thus translating the subpacketization reduction problem to finding a small-dimension matrix satisfying PDA conditions.
Interestingly, authors in~\cite{PDA_first_2017} demonstrated that all the schemes in~\cite{MaddahAli-2014,Shariatpanahi2016,pooya-cc-physical-2019-journal,yu2017exact,tolli2017multi,Ji2016,D2D-CC-Optload-memtradeof-caire-2019,D2D_CC_mahmoodi_2022} could also be presented as PDAs. Motivated by the generalized framework in~\cite{PDA_first_2017}, various PDA-based \ac{CC} schemes were later proposed for different settings, aiming for reduced subpacketization~\cite{cheng_generalized_PDA_2019,DPDA2019,cheng_centralized_PDA_framework_2021}.

A major breakthrough in subpacketization reduction was achieved with the introduction of \emph{signal-level} \ac{CC} schemes in~\cite{lampiris2018adding}. In contrast to \emph{bit-level} \ac{CC} schemes~\cite{MaddahAli-2014,Shariatpanahi2016,pooya-cc-physical-2019-journal,yu2017exact,tolli2017multi,Ji2016,D2D-CC-Optload-memtradeof-caire-2019,D2D_CC_mahmoodi_2022}, where file fragments intended to different users are combined/separated using bit-wise XOR operations in the finite field, signal-level \ac{CC} schemes rely on the superposition of all precoded data terms in the signal domain and the regeneration and cancellation of the unwanted parts at the physical layer of each receiver 
(see~\cite{salehi2022enhancing} for a more detailed explanation). As a result, the design flexibility is greatly increased compared with bit-level schemes, enabling the subpacketization requirement of \ac{MISO}-\ac{CC} setups to be even smaller than their comparable \ac{SISO}-\ac{CC} settings~\cite{lampiris2018adding}. 
The signal-level scheme of~\cite{lampiris2018adding} was then extended to centralized~\cite{Shahred_cache_Emanuele_2020} and decentralized~\cite{Shared_cache_decentrlized_2021} shared-cache scenarios where a limited number of cache-enabled helper nodes serve a group of cache-less users. The applicability of signal-level \ac{CC} schemes was later extended, e.g., to \ac{MIMO} setups~\cite{salehi2021MIMO,salehi2023multicast}, and to dynamic networks wherein users may freely enter/depart the network at will~\cite{abolpour2022coded,abolpour2023cache,salehi2021lowsubpacketization}. 
Finally, to make the design of signal-level \ac{CC} schemes more systematic, an enhanced PDA framework, called multi-antenna placement and delivery arrays (MLPDA), was proposed in~\cite{MLPDA_ISIT2022}. Of course, signal-level \ac{CC} schemes also suffer from drawbacks such as inferior finite-SNR performance compared with bit-level \ac{CC} schemes~\cite{salehi2019subpacketization,salehi2022multi} and the requirement to regenerate and remove the interference in the physical layer~\cite{salehi2022enhancing}. However, the remarkable flexibility of signal-level approaches continues to inspire ongoing research endeavors aimed at utilizing them to overcome different implementation challenges encountered in \ac{CC} schemes.

\noindent\textbf{The near-far effect.} Another crucial problem of conventional  \ac{CC} schemes is the \textit{near-far} issue, which affects content delivery applications (e.g.,~\cite{Shariatpanahi2016,pooya-cc-physical-2019-journal,yu2017exact,tolli2017multi,Ji2016,D2D-CC-Optload-memtradeof-caire-2019,D2D_CC_mahmoodi_2022, PDA_first_2017,cheng_generalized_PDA_2019,DPDA2019,cheng_centralized_PDA_framework_2021}) in general and XR applications in particular. In \ac{CC} schemes, a common multicast message is transmitted to serve several users at a time, and all these users must be able to decode the message simultaneously. As a result, the achievable rate is always limited by the user(s) with the worst channel condition. Studies on \ac{SISO}-\ac{CC} networks have shown that the practical gains of \ac{CC} schemes could entirely vanish at the low-\ac{SNR} region due to the near-far issue~\cite{zhao_petros_MU_MISO_near_far_issue_ITW2021}. 
To address this issue, a congestion control technique is proposed in~\cite{destounis2020adaptive} with the intention of avoiding serving users that experience adverse channel conditions altogether. Similar scheduling approaches are also proposed in~\cite{Coded_caching_for_stochastic_wireless_network_TWC2021,liu_joint_power_energi_cc_TWC2021}, where joint queue minimization and packet control, as well as power minimization and scheduling, are considered for delay-constrained \ac{CC} applications. 
In another work~\cite{salehi2020coded}, a \ac{CC} scheme with partial codewords (i.e., with a smaller number of data terms in the codeword compared to the baseline \ac{CC} scheme of~\cite{maddah2014fundamental}) is introduced to adjust the user-specific QoE based on their current channel conditions.

\noindent\textbf{Multi-rate transmission in \ac{CC}.}
In~\cite{Shariatpanahi2016,pooya-cc-physical-2019-journal,yu2017exact,tolli2017multi,Ji2016,D2D-CC-Optload-memtradeof-caire-2019, D2D_CC_mahmoodi_2022, cheng_generalized_PDA_2019, DPDA2019, cheng_centralized_PDA_framework_2021, PDA_first_2017, MLPDA_ISIT2022, lampiris2018adding,salehi2020lowcomplexity,Shahred_cache_Emanuele_2020,Shared_cache_decentrlized_2021}, equal-sized data chunks are combined to form a common message.
In contrast, in~\cite{tang2017coded}, data terms with different sizes are combined via nested code modulation (NCM), creating codewords that serve every user in the multicasting group at a different rate. Similarly, combining the shared-cache idea of~\cite{lampiris2018adding} with the NCM of~\cite{tang2017coded}, the near-far problem was mitigated in~\cite{zhao_wireless_CC_ring_area_ISIT2021,zhao_petros_near_far_Nakagami_asilomar2021}. The proposed system model in~\cite{tang2017coded} assumed fixed link capacities, particularly tailored for backhaul networks. Motivated by the results in~\cite{tang2017coded},
a location-dependent \ac{CC} scheme was proposed in~\cite{Mahmoodi_immersive_isit2021} for networks with variable link capacities,
particularly applicable for future wireless XR applications. Later, the \ac{SISO} setting in~\cite{Mahmoodi_immersive_isit2021} was extended to a location-dependent \ac{MISO} setup in~\cite{mahmoodi_ICC2022_nonsym,mahmoodi2022asymmetric,mahmoodi2022asymmetric_arxiv_TWC2023}. Specifically, while the schemes in~\cite{mahmoodi2022asymmetric,mahmoodi2022asymmetric_arxiv_TWC2023} benefit from the NCM for data delivery, the scheme in~\cite{mahmoodi_ICC2022_nonsym} uses a modified version of the signal-level scheme in~\cite{salehi2020lowcomplexity} to support multi-rate transmission. Nevertheless, the schemes proposed in~\cite{mahmoodi2022asymmetric,mahmoodi2022asymmetric_arxiv_TWC2023} do not scale well with the increasing number of users. This is attributed to the exponential increase in the number of variables and constraints in the transmit precoder design optimization problem and the complex receiver structure. 
Similarly, the signal-level scheme proposed in~\cite{mahmoodi_ICC2022_nonsym} is limited to scenarios where the global caching gain is not greater than the multiplexing gain. Hence, a general framework that can scale with the number of users without such scaling impediments is still missing in the literature.

\subsection{Our Contribution}
\label{section:our_contribution}

A novel location-dependent multi-antenna \ac{CC} scheme is proposed in this paper, leveraging location-aware placement and delivery arrays (LAPDA) formed using a proper set of MLPDAs (c.f.,~\cite{MLPDA_ISIT2022}). In the considered system setup, users are equipped with dedicated cache memories and can roam freely within the application environment. A location-aware, non-uniform memory allocation strategy similar to~\cite{mahmoodi2022asymmetric_arxiv_TWC2023} is employed to ensure that users in areas with poor wireless link quality do not experience excessive delivery times. Due to different amounts of memory allocated to different locations, multiple location-dependent MLPDAs are utilized to define which part(s) of every file should be cached by each user. As a result, the number of cached subfiles for each location-dependent content could be different, and hence, a different transmission schedule may be needed to deliver the missing subfiles requested at each location. To handle this requirement, a file-mapping process is devised where all requested files in different locations are divided into equal numbers of file fragments (of various sizes). A common MLPDA is then used to deliver all requested file segments to all users, regardless of their location.

Parts of this paper have been published in our previous work~\cite{mahmoodi_ICC2022_nonsym}. In this paper, one important limitation of~\cite{mahmoodi_ICC2022_nonsym} is addressed, namely the requirement for a larger spatial multiplexing gain than the global caching gain. This limitation arose from the utilization of the \ac{CC} scheme proposed in~\cite{salehi2020lowcomplexity}, which is primarily designed for scenarios with a lower global caching gain relative to the multiplexing gain. To circumvent this limitation, we instead utilize a set of appropriately designed MLPDAs, which are demonstrated to be a general framework for signal-level \ac{CC} schemes~\cite{MLPDA_ISIT2022}, including shared caching-based approaches (e.g.,~\cite{parrinello2019fundamental,salehi2021MIMO,salehi2023multicast}). 
Serving a large number of concurrent users submerged into a collaborative XR experience within a bounded environment is an ideal scenario for the proposed delivery scheme. In this regard, we follow the XR connectivity framework proposed in our earlier work~\cite{mahmoodi2022asymmetric_arxiv_TWC2023} but utilize a simple linear transceiver design for data delivery with a fast iterative beamforming process through the use of LAPDAs that enable unicast transmissions. In particular, the unicast transmission allows a low-complexity beamformer design based on weighted max-min optimization, which is iteratively solved via Lagrangian duality. This fast beamformer design allows the proposed scheme to be applied to large networks, leading to significantly higher achievable coded caching gains compared to our previous method presented in~\cite{mahmoodi2022asymmetric_arxiv_TWC2023}. This translates into a notable performance advantage over conventional unicasting and multicasting schemes, as demonstrated by simulation results. 


\subsection{Notation and structure}
Matrices and vectors are presented by boldface upper and lower case letters, respectively, and calligraphic letters are used to denote sets. For the set $\CA$ and vector $\Bv$, $|\CA|$ and $\|\Bv\|$ represent the cardinality of $\CA$ and norm of $\Bv$, respectively. For two sets $\CA$ and $\CB$, $\CA \backslash \CB$ includes the elements of $\CA$ that are not in $\CB$. Finally,  $\mathbb{W}$ denotes the set of non-negative integers, $[m]$ represents the set of integers from 1 to $m$, and $\oplus$ denotes addition in the corresponding finite field. 

The rest of this paper is organized as follows.
In section~\ref{sec:sysmodel}, we describe our location-based system model. A two-phase cache placement scheme comprised of memory allocation and cache arrangement processes is described in section~\ref{sec:cache_placement}, while section~\ref{sec:delivery} discusses the delivery procedure. In section~\ref{Sec:beamforming}, weighted-max-min beamforming, tailored for the considered location-based cache placement setup, is introduced. In the end, numerical results are provided in section~\ref{sec:Simulations}, while section~\ref{sec:conclusions} concludes the paper.

\section{System Model}
\label{sec:sysmodel}
A downlink scenario is considered where a server with $L$ transmit antennas serves $K$ single-antenna, cache-enabled users.\footnote{Here, $L$ refers to the attainable spatial multiplexing gain at the transmitter, which is upper-bounded by the real number of antennas. Nevertheless, `antenna count' is used throughout the text for simplicity.} The users are located within a bounded environment, such as a gaming hall, an operating room, or an exhibition hall. The system model is quite similar to our previous study in~\cite{mahmoodi2022asymmetric_arxiv_TWC2023} but with new placement and transmission schemes designed to support large networks with improved scalability. 
Let $\CK = [K]$ denote the set of users with limited memory capacities who can navigate within the coverage area. Users are assumed to request data from the server, depending on their location and application requirements. The environment is partitioned into $S$ single transmission units (STU), wherein a distinct 3D image is required to reconstruct the $360$-degree spherical virtual viewpoint around the user at each STU~\cite{salehi2022enhancing}. As a small example, Figure~\ref{fig:system-model} represents a simple application environment with eight STUs, where $\CS$ denotes the set of STUs. 

\begin{figure}
\centering 
    \includegraphics[width=1\columnwidth,keepaspectratio]{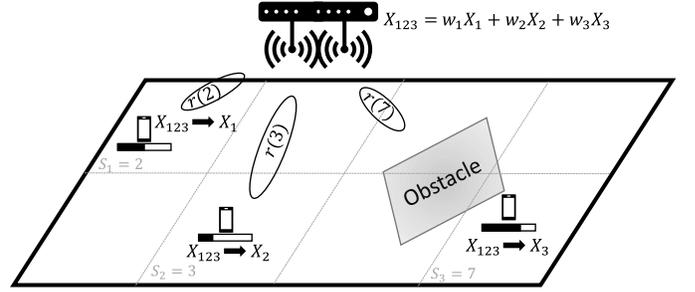}
    \caption{An application environment with $K=3$ users, split into $S=8$ STUs. $r(s)$ is the STU-specific achievable rate and $r(3) > r(2) > r(7)$. $X_{123}$ is the transmitted message, $X_i$ and ${\bf{w}}_i$ represents the data part intended for user $i$ and its corresponding precoder, respectively. The black bar below each user indicates how much of the requested data is cached.}
    \label{fig:system-model}
\end{figure}

The STU mapping is designed so that the wireless channel quality can be assumed to be almost the same for all points within a given STU. We also assume the 3D image within each STU can be decomposed into \textit{static} and \textit{dynamic} components~\cite{mahmoodi2022asymmetric_arxiv_TWC2023,salehi2022enhancing}. An example of such decomposition is shown in Figure~\ref{fig:static_dynamic_decomposition}.
A proper modeling structure, such as the one described in~\cite{flashback_2016_VR_static_dynamic_support}, would allow users to cache the entire static part and a significant portion of the dynamic part in advance~\cite{mahmoodi2022asymmetric_arxiv_TWC2023}. In this paper, we concentrate on the efficient delivery of this cacheable part of the content.\footnote{Due to the interaction of objects in the virtual world, the BS must also provide control data to aid users in reconstructing the dynamic content. However, such control data is considered to cause a fairly minor overhead and is omitted in this paper.}

\begin{figure}[t]
     \centering
      \includegraphics[width=1\columnwidth,keepaspectratio]{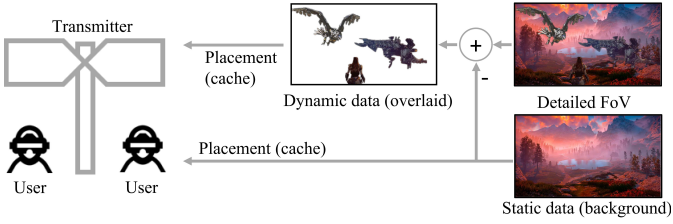}
        \caption{XR Data decomposition into static and dynamic parts.}
        \label{fig:static_dynamic_decomposition}
\end{figure}

Denote $W(s)$ as the (cacheable part of) file required for reconstructing the detailed FoV in STU $s \in \CS$ and, without loss of generality, assume $|W(s)| = F$ bits for every $s\in\CS$. Unless otherwise stated, this paper considers normalized data units, and $F$ is dropped in subsequent notations. System operation consists of two phases: \emph{cache placement} and \emph{delivery}. During the placement phase, each user $k$, equipped with a cache memory of size $M$ (normalized) bits, stores a message $Z_{k} = Z_{k}(W(s_{1}), \dots, W(s_{S}))$ in its cache, where $Z_{k}(\cdot)$ denotes a function of the files $W(s)$, $\forall s \in \CS$, with entropy not larger than $M$ bits. 

During the delivery phase, users located within the application environment request missing data from the server to reconstruct the FoV of their current locations. Specifically, a request vector $\Bd:= \{d_k \ | \ d_k \in \CS$, $\forall k \in \CK\}$ is first collected at the BS, where $W_{d_k} \equiv W(s_{d_k})$ is the file requested by user $k$ in STU $s_{d_k}$. To deliver missing parts of the files in $\Bd$, the BS then transmits several precoded messages $\Bx_{\CU}$ at different intervals, where $\CU \subseteq \CK$ denotes the set of users receiving (a part of) their requested data from $\Bx_{\CU}$. The number of precoded messages (and hence, the number of different user sets $\CU$) depends on the underlying \ac{CC} scheme. 
%
Every message $\Bx_{\CU}$ comprises several unit power codewords $x_k$, where $x_k$ contains useful data for a user $k \in \CU$. Thus, $\Bx_{\CU}$ is built as
$    \Bx_{\CU} = \sum_{k \in {\CU}} \Bv_{k}x_{k} $,
where $\Bv_{k} \in \mathbb{C}^{L}$ denotes the precoding vector dedicated to codeword $x_k$. To be specific, $\Bv_k$ is designed to suppress the interference caused by $x_k$ on a subset of users in $\CI_k \subset \CU$ that can not remove the interference by their cache contents.
%
After the transmission of $\Bx_{\CU}$, every user $k \in \CU$ receives
\begin{equation}\label{eq:recieved_signal_sysmodel}
    y_{k} = \Bh_{k}\herm\sum_{\bar{k} \in \CU}\Bv_{\bar{k}} x_{\bar{k}} + z_k,
\end{equation}
where the channel vector between the BS and user $k$ is denoted by $\Bh_{k} \in \mathbb{C}^{L}$, and $z_k \sim \mathbb{CN}(0,N_0)$ represents the additive white Gaussian noise.
Note that both the local cache content $Z_k$ and the received signals from the wireless channel over different time intervals are used at the decoder of user $k$ to reproduce the requested file $W_{d_k}$. Moreover, the instantaneous channel state information at the transmitter (CSIT) is assumed to be available during the \emph{delivery phase}, which is utilized for beamformer design and rate allocation.\footnote{CSIT measurement is feasible through reciprocal reverse link pilot measurements assuming data delivery is carried out within the channel coherence time. A detailed discussion on CSI acquisition in CC networks can be found in~\cite{salehi2020lowcomplexity}.}

Finally, as discussed in~\cite{mahmoodi2022asymmetric_arxiv_TWC2023}, an approximate throughput estimate, e.g., based on the statistics collected from previous application runs, is required for proper location-dependent cache placement. Unlike the delivery phase, it is not possible to calculate instantaneous achievable rates during the placement phase. This is because important information such as concurrently scheduled users, their locations, channel conditions, and precoding algorithms is not yet available. The aim of the approximation is to have a relative rate difference among different STUs available to allocate distinct portions of memory to them accordingly. Intuitively, to avoid extensive transmission times for users with poor connectivity, data needed at STUs with the lowest approximated rates should occupy most of the memory. To this end, we use the following rate approximation originally proposed in~\cite{mahmoodi2022asymmetric_arxiv_TWC2023} 
\begin{equation} \label{eq: state-rate}
    {r}(s) = \frac{\Omega}{F}C_p\mathbb{E}\big[\log(1+\frac{P_T \ \|\Bh_{k_s}\|^2}{N_0})\big] \quad \text{[files/second]},
\end{equation}
where $C_p$ is a pre-log scaling factor containing any practical overhead, $P_T$ is the transmission power, $\Omega$ is the communication bandwidth, and $\Bh_{k_s} \in \mathbb{C}^{L}$ is the channel vector between the server and a user $k$ located in STU $s$. Note that the expectation is taken over all user locations and channel realizations in STU $s$ (c.f.~\cite{mahmoodi2022asymmetric_arxiv_TWC2023} for more details).

\section{Cache Placement}
\label{sec:cache_placement}
A PDA-based location-dependent cache placement scheme comprising two consecutive processes, \emph{memory allocation} and \emph{cache arrangement}, is used in this paper. The memory allocation process is similar to~\cite{mahmoodi2022asymmetric_arxiv_TWC2023} and prioritizes content requested in locations with poor wireless connectivity in order to mitigate excessive delays during the content delivery phase. Given the result of the memory allocation process, the cache arrangement process is then used to clarify what data parts should be cached by each user. One of the key novelties of this paper is the introduction of a PDA-based cache arrangement process that allows the overall scheme to be scalable w.r.t various network parameters. Nevertheless, for clarity, the memory allocation process of~\cite{mahmoodi2022asymmetric_arxiv_TWC2023} is also briefly explained in the following.

%
%

\noindent\textbf{Memory allocation~\cite{mahmoodi2022asymmetric_arxiv_TWC2023}:}
Real-time applications (e.g., XR) typically require a bounded delivery time. Excessive delivery delays can be circumvented by reserving a larger share of memory to store data requested in poor connectivity areas. Accordingly, the memory allocation process specifies the normalized cache size $m(s)$ reserved for storing (a fraction of) each STU-specific file $W(s)$ at every user. 
This paper assumes no prior knowledge about the users' spatial locations during the subsequent delivery phase; hence, we consider uniform access probability for all STUs during the placement phase.\footnote{The placement efficiency can be further improved by using prior knowledge about the access likelihood for each STU.} 
%

Following~\cite{mahmoodi2022asymmetric_arxiv_TWC2023}, if $m(s)$ values are known, the total delivery time $T_T$ can be approximated as
\begin{equation} \label{eq: aprrox_deliv_time}
    \hat{T}_T = \frac{K}{K\bar{m}+L}\max_{s \in \CS}\frac{1-m(s)}{r(s)} \quad \textrm{[seconds]},
\end{equation}
where $\bar{m} = \min\limits_{s \in \CS}m(s)$ is the least allocated memory for a STU
and $r(s)$ is the approximated rate at STU $s$ (c.f. Eq.~\eqref{eq: state-rate}). 
The $K\bar{m} +L$ term in the denominator of~\eqref{eq: aprrox_deliv_time} approximates the achievable DoF for the non-uniform memory allocation scenario (note that for the uniform allocation, the DoF is upper bounded by $K\frac{M}{S}+L$~\cite{MLPDA_ISIT2022}), and the term $K \max\limits_{s \in \CS}\frac{1-m(s)}{r(s)}$ approximates the worst-case delivery time across all the STUs when $K$ users are served simultaneously. 
In order to find approximate $m(s)$ values that minimize the expected delivery time, we first rewrite~\eqref{eq: aprrox_deliv_time} as $\hat{T}_T = \frac{1}{\bar{m}+\frac{L}{K}}\max_{s \in \CS}\frac{1-m(s)}{r(s)}$, 
and then, 
formulate the memory allocation process as the following linear fractional programming (LFP) problem:
\begin{equation} 
\begin{aligned}
\label{cache-allocation}
&\min_{m(s), \; \gamma \ge 0, \; {m} \ge 0} \quad  \frac{\gamma}{{m}+\frac{L}{K}}\\
&\textrm{s.t.} \quad  \frac{1-m(s)}{r(s)} \leq \gamma, \ \forall s \in \CS, \\ &{m} \leq m(s), \ \forall s \in \CS,  \quad \sum_{s \in \CS} m(s) \leq M.
\end{aligned}
\end{equation}
Note that at the optimal solution to~\eqref{cache-allocation}, $ m = \bar{m} = \min\limits_{s \in [S]}m(s)$. Using the Charnes-Cooper transformation, the LFP in~\eqref{cache-allocation} can be reformulated as a linear programming problem and solved efficiently~\cite{mahmoodi2022asymmetric_arxiv_TWC2023}. For ease of exposition, we assume that $Km(s)$ is a positive integer for all $s\in\CS$ throughout the text. 
The non-integer $Km(s)$ case is addressed in Appendix~\ref{sec:Appendix_non_int} using time-sharing, which is an alternative method that surpasses the performance of the approach proposed in~\cite{mahmoodi2022asymmetric_arxiv_TWC2023}. 



\noindent\textbf{Cache arrangement:}
We utilize a location-aware placement delivery array (LAPDA) to store data fragments of files in users' cache memories. Let $\boldsymbol{\CQ}$ denote a specific LAPDA consisting of a set of $S$ STU-specific MLPDA matrices $\BQ_s$, $s \in \CS$, that are interrelated with an extra cross-matrix condition that ensures data delivery is possible with the given non-uniform memory allocation.
Before going through a detailed explanation of LAPDA, let us first review the general definition of MLPDAs.

\begin{definition}\label{MLPDA_definition_new}
A $(L,K,F_s,Z_s,N_s)$ MLPDA $\BQ_s = \left[\BQ_s(f_s, k)\right], \ f_s \in [F_s], \ k \in [K],$ is a $F_s \times K$ matrix whose elements include the specific symbol “$\ast$” and $N_s$ positive integers $\{1, 2, \dots, N_s\}$. For positive integers $L, K, F_s$, and $Z_s$, $\BQ_s$ satisfies~\cite{MLPDA_ISIT2022}:
    \begin{enumerate}
    \item[C1.] The symbol “$\ast$” appears $Z_s$ times in each column, such that $\frac{Z_s}{F_s} = m(s)$;
    \item[C2.] Each integer $n \in [N_s]$ appears at least once in the matrix;
    \item[C3.] Each integer appears at most once in each column; 
    \item[C4.] For every $n \in N_s$, if we define $\CU(n,s):=\{k \ | \ \exists f_s \in [F_s], \ \BQ_s(f_s,k) = n\}$, $\CF(n,s):= \{f_s \ | \ \exists k \in [K], \ \BQ_s(f_s,k) = n\}$, 
    and $\BQ_s^n$ to be a sub-matrix of $\BQ_s$ comprised of all rows and columns containing $n$ (i.e., $\BQ_s^n = [\BQ_s(f_s,k)], f_s \in \CF(n,s)$, $ k \in \CU(n,s)$), the number of integer entries in each row of $\BQ_s^n$ is less than or equal to $L$, i.e.,
        \begin{equation*} \scriptsize
            |\{k \ | \ k \in \CU(n,s), \BQ^n_s(f_s,k) \in [N_s]\}| \leq L, \forall f_s \in \CF(n,s).
        \end{equation*}
    \end{enumerate}
\end{definition}

As discussed in~\cite{MLPDA_ISIT2022}, the $(L,K,F_s,Z_s,N_s)$ MLPDA $\BQ_s$ uniquely identifies a placement-delivery strategy for a MISO network with $K$ cache-enabled users, a coded caching gain of $t_s =Km(s)$, and a spatial multiplexing gain of $L$. In this regard, each file is first divided into $F_s$ subpackets, from which all subpackets $f_s \in [F_s]$ are stored by all users $k \in [K]$ if $\BQ_s(f_s,k) = \ast$. The delivery phase then consists of $N_s$ transmissions, where at transmission $n \in [N_s]$, subpackets $f_s \in \CF(n,s)$ are sent to users $k \in \CU(n,s)$ if $\BQ_s(f_s,k) = n$. Note that condition C1 ensures that the memory constraints are met, C2 prevents empty transmission, C3 removes the need for successive interference cancellation (SIC), and C4 ensures that any interference that cannot be removed with cache content is suppressed by a proper precoder.

\begin{algorithm}[t]
\footnotesize
	\caption{Location-based cache placement}
	\begin{algorithmic}[1]
 \Procedure{Memory Allocation}{}%
    \State Find $\{m(s) \}$ by solving Eq.~\eqref{cache-allocation}
 \EndProcedure 
		\Procedure{CACHE\_ARRANGEMENT}{}
		
		\ForAll{$s \in \CS$} 
		
		
		\State $W(s) \rightarrow \{W_{f}(s) \; | \; \forall f \in [F_s]\}$
		
		\ForAll{$f \in [F_s]$}
		\ForAll{$k \in \CK$}
		\If{$\BQ_s(f,k) = \ast$}
		    \State Put $W_{p}(s)$ in the cache of user $k$
		\EndIf
		\EndFor
		\EndFor
		\EndFor
		\EndProcedure 
	\end{algorithmic}
	\label{Alg:placement}
\end{algorithm}

For the proposed location-dependent cache placement where each STU $s \in \CS$ has a possibly different allocated memory portion $m(s)$, it is necessary to use a different MLPDA $\BQ_s$ for each state to satisfy the memory constraint. This is different from conventional MLPDA schemes that use a single MLPDA to store all library files. In fact, given a set of MLPDAs $\{\BQ_s\}$, the files for every STU $s \in \CS$ are first divided into $F_s$ subfiles, where $F_s$ can vary for each STU. Then, for every STU $s \in \CS$, every user $k \in \CK$ caches all subfiles $f_s \in [F_s]$ if $\BQ_s(f_s,k) = \ast$. Algorithm~\ref{Alg:placement} summarises the placement process for a set of MLPDA $\{\BQ_s\}$.

Using a distinct $\BQ_s$ for each STU results in an unequal number of subfiles $F_s$, number of cached data elements $Z_s$, and time slots $N_s$ to deliver location-dependent missing data. Hence, an additional cross-matrix condition is added to the conventional MLPDA definition to ensure a feasible delivery scheme for the proposed non-uniform placement. Accordingly, a proper LAPDA $\boldsymbol{\CQ}$ is defined as follows.

%
%

\begin{definition}\label{LAPDA_definition}
    A $(L,K,\{F_s\},\{Z_s\},\{N_s\})$ LAPDA $\boldsymbol{\CQ}$ is a set of $S$ number of $(L,K,F_s,Z_s,N_s)$ MLPDAs  $\BQ_s$ where for every $(s,\hat{s}) \in \CS$ for which $m(s) > m(\hat{s})$ we have
    \begin{equation*}
    \begin{aligned}
        &\forall f_{\hat{s}} \in [F_{\hat{s}}], \  \exists f_s \in [F_s] : \CB_{f_{\hat{s}}} \subseteq \CB_{f_s}, \\ &\forall f_s \in [F_s], \  \exists f_{\hat{s}} \in [F_{\hat{s}}] : \CB_{f_{\hat{s}}} \subseteq \CB_{f_s},
    \end{aligned}
    \end{equation*}
    where $\CB_{f_s} = \{k \ | \ k \in [K], \ \BQ_s(f_s,k) = \ast\}$ is the set of columns including $\ast$ in their $f_s$'th row.
\end{definition}

The following example illustrates the entire cache placement process, including memory sharing and cache arrangement.
In Section~\ref{sec:delivery}, we propose a novel delivery algorithm tailored for the above described non-uniform cache placement that achieves a significant coded caching gain, similar to the location-dependent scheme of~\cite{mahmoodi2022asymmetric_arxiv_TWC2023}, but now applicable to much larger networks. Moreover, in Appendix~\ref{sec: Appendix A}, we discuss how the extra condition in Definition~\ref{LAPDA_definition} ensures the feasibility of the delivery scheme.
%

%

\begin{exmp}
\label{exmp:placement}
To illustrate the proposed location-dependent cache placement, we consider an example scenario from~\cite{mahmoodi2022asymmetric_arxiv_TWC2023} with $K=4$ users and $L = 2$ transmit antennas. The environment is split into $S=5$ STUs, and for each STU, the required data size is $F=400$ Megabytes. Each user has a cache size of $900$ Megabytes; hence, the normalized cache size is $M = 2.25$ data units. The approximated normalized throughput value for each STU is as given in Table~\ref{Table:rate&cache_distribution}, where the memory allocation resulting from solving~\eqref{cache-allocation} is also shown.

Consider the following MLPDA matrices $\BQ_1$-$\BQ_5$, which satisfy the conditions in Definition~\ref{MLPDA_definition_new} for the resulting $m(s)$ values in Table~\ref{Table:rate&cache_distribution}:
\begin{equation} \label{eq:LAPDA_example} 
  \hspace{-5pt}\BQ_1 = \begin{psmallmatrix}
\ast & 1 & 1 & 2\\
\ast & 3 & 2 & 3\\
\ast & 6 & 8 & 10\\
1 & \ast & 4 & 4\\
5 & \ast & 6 & 5\\
6 & \ast & 9 & 11\\
2 & 4 & \ast & 7\\
7 & 8 & \ast & 9\\
8 & 9 & \ast & 12\\
3 & 5 & 7 & \ast\\
10 &10 &11 & \ast\\
12 &11 & 12 & \ast
\end{psmallmatrix} \hspace{-1mm}
, \BQ_2 = \begin{psmallmatrix}
\ast & \ast & 2 & 1\\
1& \ast & \ast & 2\\
2 & 1 & \ast & \ast\\
    \ast &2 & 1& \ast
\end{psmallmatrix}
, \BQ_3 = \begin{psmallmatrix}
\ast & \ast & \ast & 1\\
1 & \ast & \ast & \ast\\
\ast & 1 & \ast & \ast\\
\ast & \ast & 1 & \ast
\end{psmallmatrix},
\end{equation}
$\BQ_4=\BQ_2$, and $\BQ_5=\BQ_1$. It can be seen that $\BQ_1$-$\BQ_5$ form a LAPDA according to Definition~\ref{LAPDA_definition}, and hence, can be utilized for location-dependent data delivery. In this regard, using $\BQ_1$-$\BQ_5$ for the data placement, files $W(1)$ and $W(5)$ are first divided into $12$ subfiles, from which $3$ are cached in each user's memory. Similarly, files  $W(2)$ and $W(4)$ are divided into $4$ subfiles, and $2$ of them are cached in the memory of each user. Finally, the file $W(3)$ is divided into $4$ subfiles, from which $3$ are cached in each user's memory. 
\begin{table}[t]
\centering
\begin{tabular}{|c||c|c|c|c|c|}
\cline{2-6}
 \multicolumn{1}{c|}{} & $s\!=\!1$   & $s\!=\!2$   & $s\!=\!3$   & $s\!=\!4$   & $s\!=\!5$    \\ \hline
$r(s)$                               & $3\times 10^3$    & $2\times 10^3$   & $1\times 10^3$   & $2\times 10^3$   & $3\times 10^3$    \\ \hline
$m(s)$                                & 0.25 & 0.5 & 0.75 & 0.5 & 0.25 \\ \hline
\end{tabular}
\vspace{-8pt}
\caption{Location-specific rate and memory allocation for Example~\ref{exmp:placement}.}
\label{Table:rate&cache_distribution}
\end{table}
\end{exmp}

\section{Content Delivery}
\label{sec:delivery}
During the delivery phase, users move within the application environment (hence, change their locations) over time.
In each time instance, users reveal their location-dependent file requests to the server.\footnote{Using dynamic CC techniques~\cite{salehi2021lowsubpacketization,abolpour2022coded,abolpour2023cache}, this system model can be easily modified to the case only a subset of users reveal their requests in each instant.} Without loss of generality, we consider a specific time slot, where every user $k$ in STU $s_k$ requests the file $W_{d_k} \equiv W(s_k)$ from the server to reconstruct its STU-specific FoV. Accordingly, the server builds and transmits several precoded messages to the requesting users. To reconstruct $W(s_k)$, user $k$ requires one normalized data unit, from which a portion of size $m_k \equiv m(s_k)$ units is available in its cache and the remaining part should be delivered by the server. Note that the conventional PDA-based delivery schemes are suited for scenarios where all users cache the same amount of data (e.g.,~\cite{MLPDA_ISIT2022,lampiris2018adding,salehi2020lowcomplexity}). So, they do \textit{not} apply to our case where users have cached different amounts of their requested files.

\subsection{PDA-based Delivery}
To tackle the challenge posed by uneven memory allocation, we first make a temporary assumption that all users have cached the same portion of $\hat{m} = \min_{k \in \mathcal{K}} m_k$ of their requested files, and use any conventional PDA-based delivery scheme in the literature (e.g.,~\cite{MLPDA_ISIT2022,lampiris2018adding,salehi2020lowcomplexity}) to generate a set of \emph{preliminary transmission vectors} (PTVs). These vectors are subsequently adjusted to accommodate the different file-indexing procedures employed for STU-dependent cache placement during the placement phase. Using the `$\min$' operation can limit the performance in certain scenarios when a subset of users have relatively smaller $m(s_k)$ values compared to the rest, e.g., as they are close to the transmitter. To address such cases, the concept of \emph{phantom users} has been proposed in~\cite{mahmoodi2022asymmetric_arxiv_TWC2023} to separately serve users with small $m(s_k)$ value via unicasting. However, in larger networks considered in this paper, such scenarios are less probable. This is due to the variable $m$ in the LFP formulation~\eqref{cache-allocation}, which inhibits assigning small values to any $m(s)$, especially when the ratio $L/K$ is small (e.g., for $K \gg L$ case considered in this paper).
%

For clarity, we designate $\Check{\Bx}_{\CU}$ to represent a PTV, which will be later modified to form the transmission vector $\Bx_{\CU}$. In order to build $\Check{\Bx}_{\CU}$, we use the MLPDA $\hat{\BQ}$ corresponding to the location of the user with the least available memory, i.e., $\hat{\BQ} \equiv \BQ_{s_{k^{*}}}$, $k^{*} := \arg\min_{k \in \CK} m(s_k)$. This means we need
$\hat{N} = N_{s_{k^{*}}}$ consecutive transmissions, and the PTV at time instant $n \in [\hat{N}]$ is given as
\begin{equation}\label{eq: multiplexed signal}
    \Check{\Bx}(n) \equiv \Check{\Bx}_{{\CU}(n)} = \sum\nolimits_{k \in \CU(n)}\Bv_{k}(n)W_{\Check{f}_k^n}(s_k) \; , 
\end{equation}
where 
\begin{equation} \label{eq:Target_user_set}
{\CU}(n):=\{k \ | \ \exists \Check{f}_k^n \in [F_{s_{k^*}}], \ \hat{\BQ}(\Check{f}_k^n,k) = n\}
\end{equation}
is the set of users served in the $n$'th transmission, $\Check{f}_k^n$ is the 
\emph{temporary} index of the subfile 
destined to user $k \in \CU(n)$, and $\Bv_{k}(n)$ is the optimized beamforming vector dedicated to user $k$. Specifically, $\Bv_{k}(n)$ is designed to suppress $W_{\Check{f}_k^n}(s_k)$ at every user in the interference indicator set
\begin{equation}\label{eq:Interference_inc_set}
    \CI_k(n) = \{j \, | \, j \in \CU(n)\setminus k, \, \hat{\BQ}(\Check{f}_k^n,j) \neq \ast\}.
\end{equation} 
 In fact, $\CI_k(n)$ includes all users $j \in \CU(n)\setminus k$ that do not have $W_{\Check{f}_k^n}(s_k)$ available in their cache. 

Since $\Check{\Bx}(n)$ is built using matrix $\hat{\BQ}$ but data placement is done using the set of matrices $\BQ_s$, the temporary index $\Check{f}_k^n$ may \emph{not} coincide with the missing subfiles of the files requested by every user $k \in \CU(n)$. Example~\ref{exmp:virt_network} clarifies this statement.

\begin{exmp}
\label{exmp:virt_network}
Consider the network in Example~\ref{exmp:placement}, for which the cache placement is given in~\eqref{eq:LAPDA_example}. Consider a specific time instant $n$ with the following user-to-STU associations: $s_1 = 1$, $s_2 = 2$, $s_3 = 3$, and $s_4 = 4$. Denoting the set of requested sub-files for user $k$ with $\CM_k$ and assuming $A \equiv W(1)$, $B \equiv W(2)$, $C \equiv W(4)$, $D \equiv W(5)$, we have
\begin{equation}\label{eq:exmp_req_subfiles}
    \begin{aligned}
    \CM_1 &= \{ A_{4}, A_{5}, A_{6}, A_{7}, A_{8}, A_{9}, A_{10}, A_{11}, A_{12} \}, \\ 
    \CM_2 &= \{ B_{3}, B_{4}\}, \quad
    \CM_3 = \{ C_{4}\}, \quad
    \CM_4 = \{ D_{1}, D_{2}\}. 
    \end{aligned}
\end{equation}
Note that the subfiles of $A, B, C, D$ are $1/12,1/4,1/4,1/4$ data units in size, respectively. Here, the minimum available amount of the requested data in cache belongs to user $1$; hence, $\hat{\BQ} \equiv \BQ_1$, and we serve all users in $\hat{N} = N_{s_1} = 12$ time slots. For example, the first PTV is built as
\begin{equation*}
    \begin{aligned}
        \Check{\Bx}(1) &= \Bv_{1}(1) A_{\Check{4}}+\Bv_{2}(1) B_{\Check{1}}+\Bv_3(1) C_{\Check{1}}, 
    \end{aligned}
\end{equation*}
and the rest can be built accordingly. Now, considering $\Check{\Bx}(1)$, temporary file indices for users~1,~2, and~3 are~4,~1, and~1, respectively. However, from~\eqref{eq:exmp_req_subfiles}, users~2 and~3 already have $B_1$ and $C_1$ in their cache memories. Hence, we must carry out an appropriate index mapping process in PTVs before transmission. As a side note, using~\eqref{eq:Interference_inc_set}, one can easily verify that the interference indicator sets for PTV $\Check{\Bx}(1)$ are $\CI_{1}(1) = {3}$, $\CI_{2}(1) = {3}$, and $\CI_{3}(1) = {2}$.
\end{exmp}

To adjust temporary file indices in PTVs, we first note that every user $k$ appears $\hat{F} - \hat{Z}$ times in all PTVs, where $\hat{Z} \equiv Z_{s_{k^*}}$ and $\hat{F} \equiv F_{s_{k^*}}$. However, in practice, every user $k$ needs $F_{s_k}-Z_{s_k}$ subfiles to construct its FoV. Hence, to uniformly map the missing subfile indices $f_{s_k} \in [F_{s_k}]$ into temporary PTV indices $\Check{f}_k^n \in [\hat{F}]$, we need to divide each subfile into $D_k = \alpha\frac{\hat{F}-\hat{Z}}{F_{s_k}-Z_{s_k}}$ \emph{file-fragments}, where $\alpha$ is a normalizing coefficient guaranteeing $D_k$ is an integer for every user~$k$ (for example, we may set $\alpha$ to be the smallest common multiplier of all $F_{s_k}-Z_{s_k}$ values). We use $W^{\phi}_{f_{s}}(s_k)$, $\phi \in [D_k]$, to represent the file-fragments resulting from subfile $ W_{f_{s}}(s_k)$.

After the division of subfiles into file-fragments, the transmission vectors $\Bx(n)$ are obtained from PTVs $\Check{\Bx}(n)$ by replacing each subfile $ W_{\Check{f}_k^n}(s_k)$ with $\underset{{m \in [\alpha]}}{\prod}\left(W^{q}_{\BP_k(n,m)}(s_k)\right)$, where $\prod$ denotes bit-wise concatenation, $\BP_k$ is the $\hat{N} \times \alpha$ \emph{user-specific file-fragment index matrix} (explained shortly), and $q(W_{\BP_k(n,m)}(s_k))$ represents the file-specific counter corresponding to the $q$'th fragment of the file $W_{\BP_k(n,m)}(s_k)$. After initializing with $q=1$, it is incremented by one each time a fragment of $W_{\BP_k(n,m)}(s_k)$ is assigned to the transmission vector. Finally, $\Bx(n)$ is expressed as
\begin{equation}\label{eq:original data vector}
    \Bx(n) = \sum_{k \in \CU(n)}\Bv_{k}(n)\underset{{m \in [\alpha]}}{\prod}\left(W^q_{\BP_k(n,m)}(s_k)\right) . \; 
\end{equation}
\begin{algorithm}[t]
\footnotesize
    \caption{File-Fragment Index Matrix}
    \label{alg:File_ind_gen}
    \begin{algorithmic}[1]
        \Function{\textsc{Index-generator}}{$\{\BG_k\}, \hat{\BQ}$}
        \ForAll{$k \in [K]$}
            \State $\BP_k \gets \mathbf{0}$
        \EndFor
         \ForAll{$n \in [\hat{N}]$}
                    \ForAll{$k \in \CU(n)$}
                        \State $\chi \gets \Check{f}_k^n$
                        \ForAll{$m \in [\alpha]$}
                            \ForAll{$\omega \in [F_{s_k}]$}
                                \If{$g_{\chi,\omega,k} > 0$}
                                    \State $\BP_{k}(n,m) \gets \omega$
                                    \State $g_{\chi,\omega,k} \gets g_{\chi,\omega,k}-1$
                                    \State Break;
                                \EndIf
                            \EndFor
                        \EndFor
                    \EndFor
          \EndFor
            \State \Return $\{\BP_k\}$
        \EndFunction
    \end{algorithmic}
\end{algorithm}
Matrix elements $\BP_k(n,m) \in [F_{s_k}]$ are designed such that 1) all the missing subfiles are delivered to all the users and 2) the cache-aided interference cancellation is performed correctly. To build $\BP_k$, we first form $K$ user-specific File-Mapping (FM) matrices $\BG_k, \forall k \in [K]$, with size $\hat{F} \times F_{s_k}$, to map 
missing subfile indices to temporary PTV indices. Denoting the $i$'th row and $j$'th column of the matrix $\BG_k$ by $g_{i,j,k}$, it represents the number of file-fragments of the subfile $W_{j}(s_k)$ that should be included in the concatenation process while building $\Bx(n)$ in~\eqref{eq:original data vector}, if the corresponding PTV $\Check{\Bx}(n)$ includes $W_{i}(s_k)$ (recall that $j \in [F_{s_k}]$ and $i \in [\hat{F}]$). Accordingly, the FM matrices $\BG_k, k \in [K]$ are defined as follows:

\begin{definition} \label{FM_matrix_definition}
A user-specific FM matrix $\BG_k = \left[g_{i,j,k}\right], \ \forall i \in [\hat{F}], j \in [F_{s_k}], k \in [K]$ is a $\hat{F} \times F_{s_k}$ matrix whose elements are comprised of non-negative integers, and satisfy the followings:
\begin{enumerate}
  \item[C1.] 
  Denoting the temporary index set (TIS) of user $k$ as $\Check{\CN}_{k}$ (i.e., $|\Check{\CN}_{k}| = \hat{F}-\hat{Z}$), every row of $\BG_k$ not included in $\Check{\CN}_{k}$ should be zero, i.e.,
  \begin{equation*}
      g_{i,j,k} = 0, \quad \forall k \in [K],  \ \forall i \in [\hat{F}] \setminus \Check{\CN}_{k},  \ \forall j \in [F_{s_k}].
  \end{equation*}
  \item[C2.] 
  For every user $k$, denoting the set of its missing subfile indices in STU $s_k$ as a requested index set (RIS) $\CN_k$ (i.e., $|\CN_k| = F_{s_k}-Z_{s_k}$), every column of $\BG_k$ not included in $\CN_k$ should be zero, i.e., 
  \begin{equation*}
      g_{i,j,k} = 0, \quad \forall k \in [K], \ \forall j \in [F_{s_k}] \setminus \CN_k, \ \forall i \in [\hat{F}].
  \end{equation*}
  \item[C3.] Based on Definition~\ref{MLPDA_definition_new}, each temporary index $i \in \Check{\CN}_{k}$ appears only once in PTVs. However, every user $k$ needs a total number of $D_k (F_{s_k}-Z_{s_k}) = \alpha(\hat{F}-\hat{Z}) = \alpha|\Check{\CN}_{k}|$ file-fragments. Thus, $\alpha$ file-fragments must be considered for the concatenation process in~\eqref{eq:original data vector}, i.e., 
  \begin{equation*}
      \sum_{j \in \CN_k}g_{i,j,k} = \alpha, \quad \forall k \in [K], \ \forall i \in \Check{\CN}_{k}.
  \end{equation*}
  \item[C4.] As discussed above, each missing subfile of user $k$ is divided into $D_k$ file-fragments. To make sure that all these file-fragments are delivered, the sum of the elements in every column of $\BG_k$ that is included in RIS must be equal to $D_k$, i.e., 
  \begin{flalign*}
      \sum_{i \in \Check{\CN}_k}g_{i,j,k} = D_k, \quad \forall k, \ \forall j \in \CN_k. 
  \end{flalign*}
\end{enumerate}
\end{definition}

\begin{algorithm}[t]
\footnotesize
    \caption{Location-aware Content Delivery}
	\begin{algorithmic}[1]
		\Procedure{DELIVERY}{$\{\BQ_k\}$}
		
		\State $k^{*} = \arg\min_{k \in \CK} m_k$
            \State $\hat{\BQ} \gets \BQ_{k^{*}}$
            \State $\Bq \gets \mathbf{0}$
            \ForAll{$k \in [K]$}
            \State Form $\BG_k$ based on $\hat{\BQ}$
            \State Compute $\BP_k$ based on $\BG_k$ using Algorithm~\ref{alg:File_ind_gen}
            \EndFor
		\ForAll{$n \in [\hat{N}]$} 
		\State $\Bx(n) \gets 0$
		        \ForAll{$k \in \CU(n)$}
                    \State $x_{k} \gets 0$
		            \ForAll{$m \in [\alpha]$}
                        \State $p \gets \BP_{k}(n,m)$
                        \State $\Bq(W_{p}(s_k)) \gets \Bq(W_{p}(s_k)) +1$
                        \State $\chi \gets \Bq(W_{p}(s_k))$
		                \State $x_k \gets$ $ \prod\left(x_{k}, W_{p}^{\chi}(s_k)\right)$ 
		            \EndFor
              \State $\Bx(n) \gets \Bx(n) + \Bv_{k}(n)x_{k}$
		        \EndFor
            \State Transmit $\Bx(n)$
		\EndFor
		\EndProcedure
	\end{algorithmic}
	\label{Alg:Delivery_main}
\end{algorithm}

The conditions C1-C4 in Definition~\ref{FM_matrix_definition} constitute a system of equations that needs to be solved to obtain the matrix $\BG_k$. The details of creating and solving this system of equations are provided in Appendix~\ref{sec: Appendix A}. 

Once $\{\BG_k\}$ matrices are found, we can use Algorithm~\ref{alg:File_ind_gen} to build $\{\BP_k\}$. In a nutshell, in each round (indexed by $n \in [\hat{N}]$ and $m \in [\alpha]$), if $g_{\Check{f}_k^n, \omega, k}$ is positive, a file-fragment of $W_\omega(s_{k}), k \in \CU(n)$ is assigned to $\BP_k(n,m)$ and  $g_{\Check{f}_k^n, \omega, k}$ is subtracted by $1$ for the following rounds. The whole delivery process is summarized in Algorithm~\ref{Alg:Delivery_main}, and a clarifying example is provided in the following.

\begin{exmp}
\label{exmp:interference-free-delivery}
Consider the network in Example~\ref{exmp:virt_network}, with the set of requested subfiles for each user ($\CM_1$-$\CM_4$) given in~\eqref{eq:exmp_req_subfiles}. Following Definition~\ref{FM_matrix_definition},
the TIS (the set of temporary indices) of each user can be written as
$\Check{\CN}_{{1}} = \{4, 5, 6, 7, 8, 9, 10, 11, 12\}$, $\Check{\CN}_{{2}} = \{1,2,3,7,8,9,10,11,12\}$, $\Check{\CN}_{{3}} = \{1,2,3,4,5,6,10,11,12\}$, and $\Check{\CN}_{{4}} = \{1,2,3,4,5,6,7,8,9\}$, 
while the RIS (the set of missing subfile indices) for each user is given as $\CN_1 = \{4, 5, 6, 7, 8, 9, 10, 11, 12\}, \, \CN_2 = \{3,4\}, \, \CN_3 = \{4\}$ and $\CN_4 = \{1,2\}$.
Note that while the TIS is the same size for all users, the size of the RIS can be different. To map each RIS into its corresponding TIS, we first need to divide the requested subfiles of users $1$-$4$ into $D_1 =2$, $D_2={9}$, $D_3={18}$, and $D_4=9$ file-fragments, respectively (note that $\alpha = 2$ is required to have integer $D_k$ also for $k=2,4$). Then, according to~\eqref{eq:original data vector}, each subfile $W_{\Check{f}_k^n}(s_k)$ is replaced by the concatenation of $\alpha=2$ missing file-fragments $W^q_{\BP_k(n,m)}(s_k)$. As a result, recalling from Example~\ref{exmp:virt_network} that the size of each missing subfile for users $1$-$4$ is $\frac{1}{12}$, $\frac{1}{4}$, $\frac{1}{4}$, and $\frac{1}{4}$ data units, the size of the transmitted data to these users would be $\frac{1}{12}, \frac{1}{18}, \frac{1}{36}$ and $\frac{1}{18}$ data units, respectively. Note that the size of the intended data for each user is proportional to the approximated rate at its location.

Now, let us review how the transmission vector $\Bx(n)$ is built from PTV $\Check{\Bx}(n)$. First, we form the FM matrices $\BG_1$-$\BG_4$ to map each RIS $\CN_k$, $k \in [K]$ to its corresponding TIS $\Check{\CN}_{k}$. One such set of matrices is as follows
\begin{equation*} 
    \begin{aligned}
    &\BG_1 = \begin{psmallmatrix}
    0 & 0 & 0 & 0 & 0 & 0 & 0 & 0 & 0 & 0 & 0 & 0 \\
    0 & 0 & 0 & 0 & 0 & 0 & 0 & 0 & 0 & 0 & 0 & 0 \\
    0 & 0 & 0 & 0 & 0 & 0 & 0 & 0 & 0 & 0 & 0 & 0 \\
    0 & 0 & 0 & 2 & 0 & 0 & 0 & 0 & 0 & 0 & 0 & 0 \\
    0 & 0 & 0 & 0 & 2 & 0 & 0 & 0 & 0 & 0 & 0 & 0 \\
    0 & 0 & 0 & 0 & 0 & 2 & 0 & 0 & 0 & 0 & 0 & 0 \\
    0 & 0 & 0 & 0 & 0 & 0 & 2 & 0 & 0 & 0 & 0 & 0 \\
    0 & 0 & 0 & 0 & 0 & 0 & 0 & 2 & 0 & 0 & 0 & 0 \\
    0 & 0 & 0 & 0 & 0 & 0 & 0 & 0 & 2 & 0 & 0 & 0 \\
    0 & 0 & 0 & 0 & 0 & 0 & 0 & 0 & 0 & 2 & 0 & 0 \\
    0 & 0 & 0 & 0 & 0 & 0 & 0 & 0 & 0 & 0 & 2 & 0 \\
    0 & 0 & 0 & 0 & 0 & 0 & 0 & 0 & 0 & 0 & 0 & 2
    \end{psmallmatrix},
    \BG_2 = \begin{psmallmatrix}
    0 & 0 & 0 & 2 \\
    0 & 0 & 0 & 2 \\
    0 & 0 & 0 & 2 \\
    0 & 0 & 0 & 0 \\
    0 & 0 & 0 & 0 \\
    0 & 0 & 0 & 0 \\
    0 & 0 & 2 & 0 \\
    0 & 0 & 2 & 0 \\
    0 & 0 & 2 & 0 \\
    0 & 0 & 0 & 2 \\
    0 & 0 & 2 & 0 \\
    0 & 0 & 1 & 1 
    \end{psmallmatrix}, \\
    &\BG_3 = \begin{psmallmatrix}
    0 & 0 & 0 & 2 \\
    0 & 0 & 0 & 2 \\
    0 & 0 & 0 & 2 \\
    0 & 0 & 0 & 2 \\
    0 & 0 & 0 & 2 \\
    0 & 0 & 0 & 2 \\
    0 & 0 & 0 & 0 \\
    0 & 0 & 0 & 0 \\
    0 & 0 & 0 & 0 \\
    0 & 0 & 0 & 2 \\
    0 & 0 & 0 & 2 \\
    0 & 0 & 0 & 2
    \end{psmallmatrix}, \quad
    \BG_4 = \begin{psmallmatrix}
    2 & 0 & 0 & 0 \\
    2 & 0 & 0 & 0 \\
    2 & 0 & 0 & 0 \\
    0 & 2 & 0 & 0 \\
    1 & 1 & 0 & 0 \\
    2 & 0 & 0 & 0 \\
    0 & 2 & 0 & 0 \\
    0 & 2 & 0 & 0 \\
    0 & 2 & 0 & 0 \\
    0 & 0 & 0 & 0 \\
    0 & 0 & 0 & 0 \\
    0 & 0 & 0 & 0 
    \end{psmallmatrix}.
    \end{aligned}
\end{equation*}
Accordingly, based on FM matrices and Algorithm~\ref{alg:File_ind_gen}, the file-fragment index matrices $\BP_k$ are built as follows
\begin{equation*} 
    \begin{aligned}
    \BP_1 = \begin{psmallmatrix}
    4 & 4  \\
    7 & 7  \\
    10 & 10 \\
    0 & 0 \\
    5 & 5  \\
    6 & 6  \\
    8 & 8  \\
    9 & 9  \\
    0 & 0  \\
    11 & 11  \\
    0 & 0  \\
    12 & 12  
    \end{psmallmatrix},
    \BP_2 = \begin{psmallmatrix}
    4 & 4  \\
    0 & 0  \\
    4 & 4 \\
    3 & 3 \\
    3 & 4  \\
    4 & 4  \\
    0 & 0  \\
    3 & 3  \\
    3 & 3  \\
    4 & 4  \\
    3 & 3  \\
    0 & 0 
    \end{psmallmatrix},
    \BP_3 = \begin{psmallmatrix}
    4 & 4  \\
    4 & 4  \\
    0 & 0 \\
    4 & 4 \\
    0 & 0  \\
    4 & 4  \\
    4 & 4  \\
    4 & 4  \\
    4 & 4  \\
    0 & 0  \\
    4 & 4  \\
    4 & 4
    \end{psmallmatrix},
    \BP_4 = \begin{psmallmatrix}
    0 & 0  \\
    1 & 1  \\
    1 & 1 \\
    2 & 2 \\
    2 & 1  \\
    0 & 0  \\
    2 & 2  \\
    0 & 0  \\
    2 & 2  \\
    1 & 1  \\
    1 & 1  \\
    2 & 2
    \end{psmallmatrix}.
    \end{aligned}
\end{equation*}
Therefore, using~\eqref{eq:original data vector}, the first transmission vector would be built as
    $\Bx(1) = \Bv_{1}(1) \prod(A_{4}^{1}, A_{4}^{2}) +  \Bv_{2}(1) \prod(B_{4}^{1}, B_{4}^{2}) +  \Bv_{3}(1)\prod(C_{4}^{1}, C_{4}^{2}).$
Note that $\BP_k(n,m) = 0$ means there is no transmission for user $k \in [K]$ in time-slot $n\in \hat{N}$. 

Then, the corresponding received signal at three users are 
\begin{equation} \label{eq:actual received signals_exmp}
\begin{aligned}
 y_1 &= x_1\Bh_1^H\Bv_{1}(1)  + \underline{x_2}\Bh_1^H\Bv_{2}(1) 
 +\underline{x_3}\Bh_1^H\Bv_{3}(1) + z_1, \\  y_2 &= \underline{x_1} \Bh_2^H \Bv_{1}(1) + x_2 \Bh_2^H \Bv_{2}(1) + \underline{\underline{x_3}}\Bh_2^H \Bv_{3}(1)+z_2 \; , \\
       y_3 &= \underline{\underline{x_1}}\Bh_3^H\Bv_{1}(1) +  \underline{\underline{x_2}}\Bh_3^H\Bv_{2}(1)+ x_3 \Bh_3^H \Bv_{3}(1) + z_3 \; ,
\end{aligned}
\end{equation}
where, $x_1 = \prod(A_{4}^{1}, A_{4}^{2}), x_2 = \prod(B_{4}^{1}, B_{4}^{2})$ and $x_3 = \prod(C_{4}^{1}, C_{4}^{2})$. Recalling that files $A$, $B$, and $C$ correspond to the content of states $1$, $2$, and $3$, respectively, based on the MLPDAs $\BQ_1$-$\BQ_3$, the underline terms in the received signal $y_k$ are available the user $k$'th cache memory. Hence, by estimating $\Bh_k^{H} \Bv_{j}(1)$ (using precoded downlink pilots), the underlined interference terms of the received signals can be removed. Furthermore, the terms with double underline in equation~\eqref{eq:actual received signals_exmp} are suppressed through the utilization of beamforming vectors $\Bv_1(1)$, $\Bv_2(1)$, and $\Bv_3(1)$, which are designed based on the user' interference sets $\CI_1(1)$ to $\CI_3(1)$ (cf. Example~\ref{exmp:virt_network}). Hence, the intended data can be successfully decoded by each user. Note that user~$1$ has to receive more data compared to users~$2$ and~$3$ due to its larger file size, which is in line with its higher achievable data rate as it is located in state $s_1=1$ (cf. Example~\ref{exmp:placement}). 
%
\end{exmp}

\subsection{Weighted Max-Min Beamforming} \label{Sec:beamforming}

In this section, we explain the optimal design of precoding vectors $\Bv_{k}(n)$ in~\eqref{eq: multiplexed signal} to enhance the finite-SNR performance. Unlike the conventional max-min beamforming problem addressed in previous works such as~\cite{tolli2017multi, salehi2020lowcomplexity, D2D_CC_mahmoodi_2022}, we consider a weighted-max-min (WMM) beamforming approach to enable multi-rate transmission. In this regard, we alter the iterative approach presented in~\cite{salehi2020lowcomplexity}, with a slight modification to account for the weighted beamforming requirements, wherein the weights reflect the non-uniform quantities of data transmitted to different users. To this end, we first briefly review the delivery process.

As discussed in Section~\ref{sec:delivery}, the proposed scheme serves all the users using $\hat{N}$ transmission vectors $\Bx(n)$. Every vector $\Bx(n)$ comprises $|\CU(n)|$ data terms $x_{k}$ and the same number of beamforming vectors $\Bv_{k}(n)$ as represented in~\eqref{eq:recieved_signal_sysmodel}. Hence, the corresponding received signal at user $k$ in $n$'th transmission can be rewritten as
\begin{equation} \label{eq:received_signal} 
    y_{k} = \Bh_{k}^{H}\Bv_{k}(n)x_{k} + \hspace{-2mm}\sum_{i \in \overline{\CI_k}(n)}\hspace{-2mm}\Bh_{k}^{H}\Bv_{i}(n)x_{i} + \hspace{-6mm}\sum_{j \in \CU(n) \setminus \overline{\CI_k}(n)}\hspace{-3mm}\underline{\Bh_{k}^{H}\Bv_{j}(n)x_{j}}+ z_k,
\end{equation}
where $\overline{\CI_k}(n) = \{i \in \CU(n) \ | \ k \in \CI_i(n)\}$. In~\eqref{eq:received_signal}, each intended message $x_k$ contains a fresh data  for user $k$ with size 
\begin{equation} \label{C_k}
    c_k \equiv \frac{\alpha }{ F_{s_k} \alpha  (\frac{\hat{F}-\hat{Z}}{F_{s_k}-Z_{s_k}})} = \frac{1-m_k}{\hat{F}-\hat{Z}}.
\end{equation}
Recall that $x_{k}$ comprises $\alpha$ file-segments, where each segment is $\frac{1}{D_k}$ of a subfile and each subfile is $\frac{1}{F_{s_k}}$ data units in size. Note that the underlined terms in~\eqref{eq:received_signal} are removed from the received signals by utilizing cache memories and estimating the equivalent channels $\Bh_{k}^{H}\Bv_{j}(n)$ prior to the decoding process. Consequently, these terms are not regarded as interference. So, the received SINR at user $k$ is calculated as follows
\begin{equation}\label{eq:received_SINR}
    \gamma_k = \frac{|\Bh_{k}^{H}\Bv_{k}(n)|^2}{\underset{{{i \in \overline{\CI_k}(n)}}}{\sum}|\Bh_{k}^{H}\Bv_{i}(n)|^2+ N_0} .
\end{equation}
Moreover, the required time to deliver $x_k$ is calculated as $T_k = \frac{c_k}{\log (1 + \gamma_k)}$. Thus, the total delivery time to send $\Bx(n)$ is $T_{\CU(n)} = \max_{k \in \CU(n)}T_k$ [seconds]. 

Now, since we aim to minimize the delivery time, the beamformer optimization problem is formulated as $\min_{\{\Bv_{k}\}} \max_{k \in \CU(n)} T_{k}$, or equivalently as $\max_{\{\Bv_{k}\}} \min_{k \in \CU(n)} \frac{1}{c_k}\log(1 + \gamma_k)$, where $\log(1 + \gamma_k)$ is the dedicated rate to user $k$. Hence, the weighted minimum rate maximization for a given transmission $\Bx_{\CU(n)}$ can be formulated as $\max_{\{\Bv_{k}(n), \gamma_k\}} \min_{k \in \CU(n)}  \frac{1}{c_k}\log(1 + \gamma_k)$. Now, an equivalent objective can be achieved by using $\max_{\{\Bv_{k}(n), \bar{\gamma}, \gamma_k\}} \bar{\gamma}$, such that $\bar{\gamma} \leq \frac{1}{c_k}\log(1 + \gamma_k), \forall k \in \CU(n)$. By utilizing the fact that $\bar{\gamma} \leq \frac{1}{c_k}\log(1 + \gamma_k)$ is equivalent to $\exp(\bar{\gamma}) \leq (1+\gamma_k)^{\frac{1}{c_k}}$, we can express the weighted-max-min (WMM) beamforming problem as:
\begin{equation} \label{opt:nonconvex_WMMF_epic_MJ}
\begin{array} {l}
		\underset{\{\Bv_{k}(n), {\gamma}, \gamma_k\}}\max \ {\gamma}  \\
	 \textrm{s.t.}  \\ \quad
  \gamma^{c_{k}}-1 \leq \gamma_k, \quad \forall k \in \CU(n), \\ \quad
  \gamma_k \leq \frac{|\Bh_{k}^{H}\Bv_{k}(n)|^2}{\underset{{\substack{i \in \overline{\CI_k}(n)}}}{\sum}|\Bh_{k}^{H}\Bv_{i}(n)|^2+ N_0}, \quad \forall k \in \CU(n), \\ \quad
	  \sum_{k \in \CU(n)}\|\Bv_{k}(n)\|^2\leq P_T, 
\end{array}
\end{equation} 
where $P_T$ is the transmit power and $\gamma = \exp{(\bar{\gamma})}$. The quasi-convex problem~\eqref{opt:nonconvex_WMMF_epic_MJ} is similar to the one discussed in~\cite{salehi2020lowcomplexity}, but with additional convex constraints $\gamma^{c_{k}}-1 \leq \gamma_k, \ \forall k \in \CU(n)$.~\footnote{Note that $\gamma^{c_{k}}$ is convex for $c_k \geq 1$. Since $\{c_k\}$ are just considered as weights for rate allocation purposes, they can be easily replaced by $\{c_k \gets {c_k}/{\min_{k \in [K]} c_k}\}$ to make $\gamma^{c_{k}}$ a convex function.} This problem can be optimally solved by conducting a search over $\gamma$ using bisection and applying the Lagrangian duality (LD) scheme in~\cite{salehi2020lowcomplexity} for a fixed $\gamma_k=\gamma^{c_k}-1$.

The LD scheme is an iterative fast-beamforming method used for linear-beamformer design in the literature (c.f.,~\cite{salehi2020lowcomplexity} and the references therein). Since it has already been thoroughly described in~\cite{salehi2020lowcomplexity}, we will briefly review the LD scheme and its modifications for our WMM optimization. Specifically, we iteratively determine the optimal value of $\gamma$ using a bisection search, where $\gamma_k = \gamma^{c_k}-1$. Once $\gamma$ is fixed, we employ the fixed point iteration method~\cite{salehi2020lowcomplexity} to obtain the dual variables $\nu_k$ for $\forall k \in \CU(n)$. Specifically, in the initial step, we initialize $\nu_k$ as $\nu_k \gets \nu_k[1]$. Subsequently, we iteratively update the dual variables until the desired level of convergence is attained. For each iteration $\tau+1$, the dual variable $\nu_k$ is updated according to the following:
\begin{equation}\label{eq:fual_powers}
    \nu_k (\tau+1) = (\gamma^{c_k}-1)\left({\Bh^{H}_k \mathbf{\Sigma}^{-1}_{k}(\tau) \Bh_k}\right)^{-1},
\end{equation}
where $\mathbf{\Sigma}_{k}(\tau) = {\sum}_{{\hat{k} \in \CI_k(n) }}\nu_{\hat{k}}(\tau) \Bh_{\hat{k}}\Bh^{H}_{\hat{k}}+N_0\BI$. When~\eqref{eq:fual_powers} is converged, the normalized beamformer $\bar{\Bv}_{k}(n), \forall k \in \CU(n)$, is calculated as $\bar{\Bv}_{k}(n)=\mathbf{\Sigma}^{-1}_{k} \Bh_k/\|\mathbf{\Sigma}^{-1}_{k}\Bh_k\|$, where $\mathbf{\Sigma}_{k} = {\sum}_{{\hat{k} \in \CI_k(n) }}\nu_{\hat{k}} \Bh_{\hat{k}}\Bh^{H}_{\hat{k}}+N_0\BI$. To determine the power vector of the beamformers, denoted by $\tilde{\mathbf{p}}$, we can follow the same steps as presented in~\cite[Eq.(26)]{salehi2020lowcomplexity}. Note that $\gamma$ will be updated based on $\tilde{\mathbf{p}}$ to satisfy the power constraint $P_T$. Hence, the set of optimal downlink beamformers is computed as $\Bv_{k}(n) = \sqrt{p_k}\bar{\Bv}_{k}(n)$, where $p_k$ is the $k$'th entry of $\tilde{\mathbf{p}}$.
\begin{remark}\label{remark_proportional_rate}
    Similar to~\cite{mahmoodi2022asymmetric_arxiv_TWC2023}, the proposed WMM beamforming in~\eqref{opt:nonconvex_WMMF_epic_MJ} results in proportional rate allocation such that $\frac{\log(1+\gamma_k)}{\log(1+\gamma_{\bar{k}})} = \frac{c_{\bar{k}}}{c_{k}}$, $\forall(k,\bar{k}) \in \CU(n)$ . However, unlike~\cite{mahmoodi2022asymmetric_arxiv_TWC2023}, the proposed delivery scheme in this paper removes the need for successive interference cancellation (SIC) at the receiver. Herein, the SIC requirement is removed due to condition C3 in Definition~\ref{MLPDA_definition_new}, which prevents multiple message transmissions to a single user at a time. This paper proposes a scheme that is more suitable for large networks, not only because of its simplified transceiver design, but also because of its reduced processing requirements for signal-level delivery, decreased subpacketization~\cite{salehi2021lowsubpacketization}, and the ability to employ shared caching concepts~\cite{Shahred_cache_Emanuele_2020,Shared_cache_decentrlized_2021,abolpour2023cache}.
\end{remark}

Now, using Eq.~\eqref{C_k}, the total transmission time $T_T$ for the whole delivery process can be calculated as
\begin{equation}\label{eq:T_T_exact}
    T_T = \frac{1}{\hat{F}-\hat{Z}}\sum_{n \in [\hat{N}]}\max_{k \in \CU(n)}\frac{1-m_k}{\log(1+\gamma_k)}.
\end{equation}
It is imperative to note that the calculation of the total transmission time $T_T$ in equation~\eqref{eq:T_T_exact}, relies on knowing the beamformers for all $\hat{N}$ transmissions, which in turn requires accurate information about user locations and channel conditions. However, during the placement phase, the actual user locations and channel states are unknown, rendering $T_T$ computation infeasible. To address this challenge, we adopt an approximation for $T_T$ for placement purposes that assumes a uniform access probability for all STUs. This approximation is in accordance with the findings presented in~\cite{mahmoodi2022asymmetric_arxiv_TWC2023} and is based on the delivery in Section~\ref{sec:delivery}. Thus, the same approximation as in~\cite{mahmoodi2022asymmetric_arxiv_TWC2023} is obtained in this paper, despite the utilization of different delivery and placement techniques.
\begin{lemma}\label{lemm:aprox_delivery_time}
The total delivery time $T_T$ calculated in~\eqref{eq:T_T_exact} can be approximated as
\begin{equation}\label{eq:T_T_approx}
    \hat{T}_T = \frac{K}{K\bar{m}+L}\max_{s \in \CS}\frac{1-m(s)}{r(s)} \; .
\end{equation}
\end{lemma}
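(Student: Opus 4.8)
The plan is to derive the approximation by passing from the exact, channel-dependent expression in~\eqref{eq:T_T_exact} to a placement-phase quantity that depends only on the memory profile $\{m(s)\}$ and the averaged rates $\{r(s)\}$. First I would rewrite the summand: since $1-m_k = (\hat{F}-\hat{Z})\,c_k$ by~\eqref{C_k}, each term $\frac{1-m_k}{\log(1+\gamma_k)}$ equals $(\hat{F}-\hat{Z})\,T_k$ with $T_k = c_k/\log(1+\gamma_k)$, so $T_T = \sum_{n\in[\hat{N}]}\max_{k\in\CU(n)}T_k$ is simply the sum of the per-transmission delivery times. By Remark~\ref{remark_proportional_rate} the WMM beamformer equalizes $c_k/\log(1+\gamma_k)$ across all $k\in\CU(n)$, so the inner maximum is attained by every served user and is well defined. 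I would also record that, because $s_{k^*}$ is the bottleneck STU, condition~C1 of Definition~\ref{MLPDA_definition_new} gives $\hat{Z}/\hat{F}=m(s_{k^*})=\bar{m}$ and hence $\hat{F}-\hat{Z}=\hat{F}(1-\bar{m})$.

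Next I would invoke the two approximations underlying the placement phase. Since instantaneous rates are unavailable during placement, I replace each $\log(1+\gamma_k)$ by the averaged STU rate $r(s_k)$ of~\eqref{eq: state-rate}, turning $\max_{k\in\CU(n)}\frac{1-m_k}{\log(1+\gamma_k)}$ into $\max_{k\in\CU(n)}\frac{1-m(s_k)}{r(s_k)}$. Under the uniform-access assumption every STU is equally likely to appear in a transmission, so the per-transmission worst case is taken over a representative set of locations and can be replaced by the global worst case $\max_{s\in\CS}\frac{1-m(s)}{r(s)}$, which is independent of $n$. Pulling this constant out of the sum over the $\hat{N}$ transmissions yields
\begin{equation*}
    \hat{T}_T \approx \frac{\hat{N}}{\hat{F}-\hat{Z}}\,\max_{s\in\CS}\frac{1-m(s)}{r(s)}.
\end{equation*}

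It then remains to show $\frac{\hat{N}}{\hat{F}-\hat{Z}}=\frac{K}{K\bar{m}+L}$, which I would obtain by counting. Each of the $K$ columns of $\hat{\BQ}$ contains $\hat{F}$ entries of which exactly $\hat{Z}$ are $\ast$, leaving $\hat{F}-\hat{Z}$ integer entries per column and $K(\hat{F}-\hat{Z})$ integer entries in total. By condition~C3 every integer $n$ occurs at most once per column, so it occurs exactly once in each column of $\CU(n)$ and thus contributes $|\CU(n)|$ entries; summing over the $\hat{N}$ transmissions gives $\sum_{n\in[\hat{N}]}|\CU(n)|=K(\hat{F}-\hat{Z})$. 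Invoking the DoF characterization of the MLPDA---each transmission serves $t_{s_{k^*}}+L=K\bar{m}+L$ users, i.e. $|\CU(n)|\approx K\bar{m}+L$---I get $\hat{N}\,(K\bar{m}+L)\approx K(\hat{F}-\hat{Z})$, hence $\frac{\hat{N}}{\hat{F}-\hat{Z}}=\frac{K}{K\bar{m}+L}$. Substituting this coefficient into the displayed approximation delivers~\eqref{eq:T_T_approx}.

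The main obstacle is not the algebra but justifying the two approximation steps: replacing the instantaneous rate $\log(1+\gamma_k)$ by the ensemble-averaged $r(s_k)$, and, more delicately, asserting that every transmission serves essentially the full DoF of $K\bar{m}+L$ users. The latter is an idealization---condition~C4 only guarantees at most $L$ integer entries per row of $\BQ_s^n$, so $|\CU(n)|$ equals $K\bar{m}+L$ only for DoF-optimal MLPDAs, and only on average otherwise. This is precisely why the statement is phrased as an approximation rather than an identity, and I would present the argument as a derivation consistent with the uniform-access, large-network regime ($K\gg L$) assumed throughout, rather than as an exact equality.
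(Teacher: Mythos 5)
Your derivation follows essentially the same three-step route as the paper's proof: replace each instantaneous rate $\log(1+\gamma_k)$ by the STU-level approximation $r(s_k)$; replace the per-transmission maximum $\max_{k\in\CU(n)}\frac{1-m(s_k)}{r(s_k)}$ by the global bound $\max_{s\in\CS}\frac{1-m(s)}{r(s)}$; and then identify $\frac{K(\hat{F}-\hat{Z})}{\hat{N}}$ with the sum-DoF and approximate it by its upper bound. Your explicit counting argument (each column of $\hat{\BQ}$ has $\hat{F}-\hat{Z}$ integer entries, so $\sum_{n\in[\hat{N}]}|\CU(n)| = K(\hat{F}-\hat{Z})$, making the ratio the average number of users served per transmission) is a nice unpacking of a fact the paper simply cites from the MLPDA literature.

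The one substantive discrepancy is your assertion that $\hat{Z}/\hat{F} = m(s_{k^*}) = \bar{m}$. This conflates two different quantities. The quantity $m(s_{k^*})$ is the paper's $\hat{m} = \min_{k\in[K]} m(s_k)$, the minimum allocated memory over the STUs \emph{actually occupied by users at delivery time}; $\bar{m} = \min_{s\in\CS} m(s)$ is the minimum over \emph{all} STUs. These coincide only when some user happens to stand in a globally worst STU, which need not hold. The paper's final step exists precisely to bridge this gap: since user locations are unknown during placement, $\hat{m}$ is replaced by its lower bound $\bar{m}$, producing the placement-computable expression of the lemma. Your argument lands on the correct formula, but only because you silently identified the two quantities; to be fully correct you should carry $\hat{m}$ through the DoF step, obtaining $T_T \sim \frac{K}{K\hat{m}+L}\max_{s\in\CS}\frac{1-m(s)}{r(s)}$, and then invoke $\hat{m} \geq \bar{m}$ as a final approximation to arrive at~\eqref{eq:T_T_approx}. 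This last step is not cosmetic: it is what makes $\hat{T}_T$ a function of placement-phase information only, which is the entire purpose of the lemma.
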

\begin{proof}
We first substitute $log(1+\gamma_k)$ in~\eqref{eq:T_T_exact} with its upper bound $r(s_k)$ given in~\eqref{eq: state-rate} to get
\begin{equation}\label{eq:T_T-first_approx}
    T_T \sim \frac{1}{\hat{F}-\hat{Z}}\sum_{n \in [\hat{N}]}\max_{k \in \CU(n)}\frac{1-m_k}{r(s_k)} \; .
\end{equation}
Then, using inequality $\max\limits_{k \in {\CU(n)}}\frac{1-m(s_k)}{r(s_k)} \leq \max\limits_{s \in \CS}\frac{1-m(s)}{r(s)}$, we substitute the RHS of~\eqref{eq:T_T-first_approx} with its upper bound to get $T_T \sim 
\frac{\hat{N}}{\hat{F}-\hat{Z}}\max\limits_{s \in \CS}\frac{1-m(s)}{r(s)}$. Note that $\frac{\hat{N}}{\hat{F}-\hat{Z}} = \frac{K}{\frac{K(\hat{F}-\hat{Z})}{\hat{N}}}$, where $\frac{K(\hat{F}-\hat{Z})}{\hat{N}}$ is nothing but the sum-DoF~\cite{MLPDA_ISIT2022}. Hence, approximating $\frac{K(\hat{F}-\hat{Z})}{\hat{N}}$ with its upper bound $K\hat{m}+L$~\cite{MLPDA_ISIT2022}, we have 
\begin{equation}\label{T_T_approx_hat}
    T_T \sim \frac{K}{K\hat{m}+L}\max_{s \in \CS}\frac{1-m(s)}{r(s)} \; .
\end{equation}
Note that $\hat{m} = \min_{k \in [K]}m(s_k)$ requires user location knowledge, which is unknown during the placement phase. Thus, replacing $\hat{m}$ with its lower bound $\bar{m}$, \eqref{eq:T_T_approx} is achieved.
\end{proof}

\section{Simulation Results}
\label{sec:Simulations}
To evaluate the proposed location-dependent scheme, we conduct numerical simulations in a scenario similar to the one studied in~\cite{mahmoodi2022asymmetric_arxiv_TWC2023}, albeit with a much larger number of users $K$.\footnote{
The number of users in~\cite{mahmoodi2022asymmetric_arxiv_TWC2023} is limited to nine due to the complex transceiver design and the large subpacketization requirement. Of course, it provides an improved multicasting gain by serving multiple users with a single message.}
Specifically, we consider a $30 \times 30 \mathrm{[m^2]}$ XR application environment, where a
unique 3D image is required at each STU of size $1 \times 1 \mathrm{[m^2]}$ to reconstruct a detailed FoV (resulting in a total number of $S=900$ STUs).
A transmitter with $L$ antennas is located on the ceiling, $5 \mathrm{[m]}$ above the floor, in the middle of the room.
We assume that the small-scale fading of the channel vectors $\Bh_k$ follows a Rayleigh distribution and use the path loss model of~\cite{mahmoodi2022asymmetric_arxiv_TWC2023} for a user at STU $s \in [S]$: 
\begin{equation*}
    PL(s) = 32.4[dB]+20\log_{10}(f) + 10\eta\log_{10}(d_s) + \zeta,
\end{equation*}
where $d_s$ represents the distance between the center of STU $s$ and the transmitter, $\eta = 3$ is the pass-loss exponent, and $f$ denotes the frequency. To simulate the effect of randomly placed objects that can obstruct the propagation path between the transmitter and receivers, we use the term $\zeta \sim \mathbb{N}(0, {\sigma})$, where $\sigma$ is the standard deviation. Note that $\zeta$ is similar to the shadowing effect observed in outdoor propagation environments. We calculate the expected delivery rates $r(s)$ for the initial memory allocation in~\eqref{cache-allocation} by averaging over the rate values for all possible user locations and channel realizations in a given STU (c.f., Eq.~\eqref{eq: state-rate}). Without considering the shadowing effect $\zeta$, the transmit power is assumed to provide a $5 \mathrm{[dB]}$ SNR at the room boundaries, (unless mentioned otherwise). During the delivery phase, optimized weighted max-min beamformers~\eqref{opt:nonconvex_WMMF_epic_MJ} are used, and users are assumed to be located at any STU with equal probability.

Due to the stringent requirement of XR applications for a bounded delivery time, we use the 95-percentile of expected delivery time as a key performance metric (see~\cite{mahmoodi2022asymmetric_arxiv_TWC2023}).
%
We evaluate four distinct placement and delivery schemes:
\begin{itemize}
    \item \textbf{Single-user placement, w/o CC:}
    This scheme employs a memory allocation process similar to~\cite{Mahmoodi_immersive_isit2021} that maximizes the local caching gain at bottleneck areas (equivalent to~\eqref{cache-allocation}, when the denominator of the objective function is ignored). Data transmission is done with conventional unicast beamforming and without using any CC technique.  
    \item \textbf{Multi-user placement, w/o CC}: This scheme employs the memory allocation process proposed in Section~\ref{sec:cache_placement}, but data transmission is done with conventional unicast beamforming and without using any CC technique.
    \item \textbf{Multi-user placement, w/ CC}: This scheme employs the memory allocation process proposed in Section~\ref{sec:cache_placement}, together with the location-dependent CC delivery scheme in Algorithm~\ref{Alg:Delivery_main};
    \item \textbf{Uniform placement, w/ CC}: This scheme employs the conventional uniform memory allocation together with the CC delivery scheme proposed in~\cite{abolpour2023cache} based on shared caching idea.
\end{itemize}
In all of these schemes, we use the shared caching approach described in~\cite{abolpour2023cache} to construct a proper PDA, satisfying the conditions outlined in Definition~\ref{MLPDA_definition_new} (as well as Definition~\ref{LAPDA_definition} for location-dependent CC schemes).  

All schemes are compared in terms of their delivery times for $500$ random user drops and the resulting cumulative distribution functions (CDF) are shown in Fig.~\ref{fig:sim_cdf}. As shown, the scheme with uniform placement has the worst variation in total delivery time, making it unsuitable for applications with real-time content requests (e.g., XR gaming). This variation occurs because uniform placement only maximizes the minimum global caching gain, leading to optimal performance when all users have good connectivity but poor performance when some users experience poor connectivity in bottleneck areas. A location-dependent placement (even without any CC technique) can avoid this issue and improve overall performance. Furthermore, we observe that when no CC technique is used, single-user placement outperforms multi-user placement due to its higher local-caching gain. 
However, single-user placement is unsuitable for multicast CC transmission as it does not allocate any cache to the content requested in locations with good connectivity, reducing the possibility of achieving any coded caching gain. 
Finally, the proposed CC-based transmission scheme with multi-user placement provides the best performance as it enables a global caching gain while also avoiding wireless connectivity bottleneck areas.



\begin{figure*}[ht]
\begin{minipage}[c]{0.32\textwidth}
\centering 
\setlength\abovecaptionskip{-0.25\baselineskip}
\includegraphics[width=1\columnwidth,keepaspectratio]{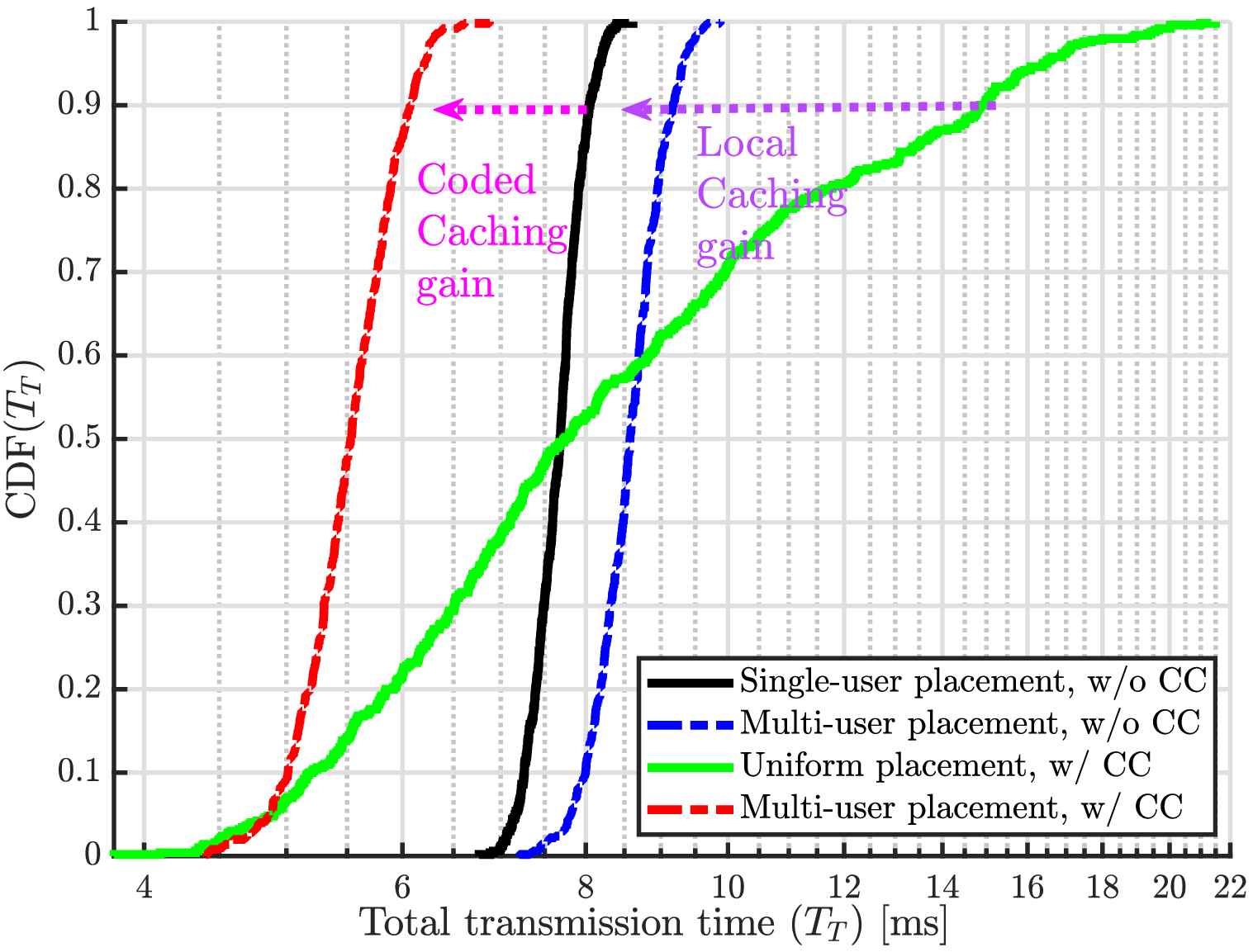}
\caption{The CDF of delivery time (logarithmic-scale) for $K = 36, M/S = 0.33, L=6$ and $\sigma = 7$.}
\label{fig:sim_cdf}
\end{minipage}
\hspace{1mm}
\begin{minipage}[c]{0.32\textwidth}
\centering 
\setlength\abovecaptionskip{-0.25\baselineskip}
\includegraphics[width=1\columnwidth,keepaspectratio]{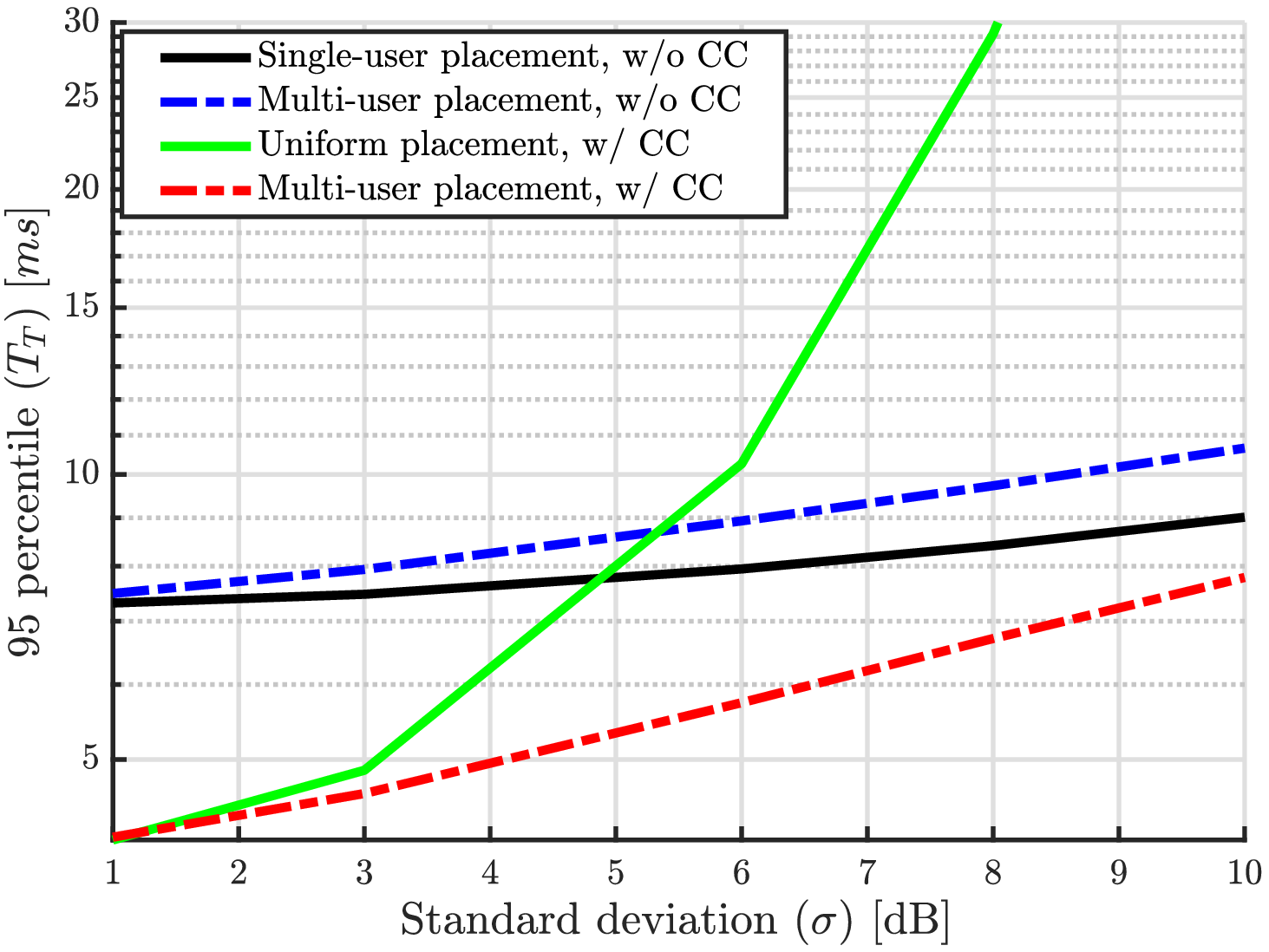}
\caption{Delivery time (logarithmic-scale) versus $\sigma$, where $K = 36, M/S = 0.33$, and $L=6$.}
\label{fig:sim_sigma_95time}
\end{minipage}
\hspace{1mm}
\begin{minipage}[c]{0.32\textwidth} 
\centering 
\setlength\abovecaptionskip{-0.25\baselineskip}
\includegraphics[width=1\columnwidth,keepaspectratio]{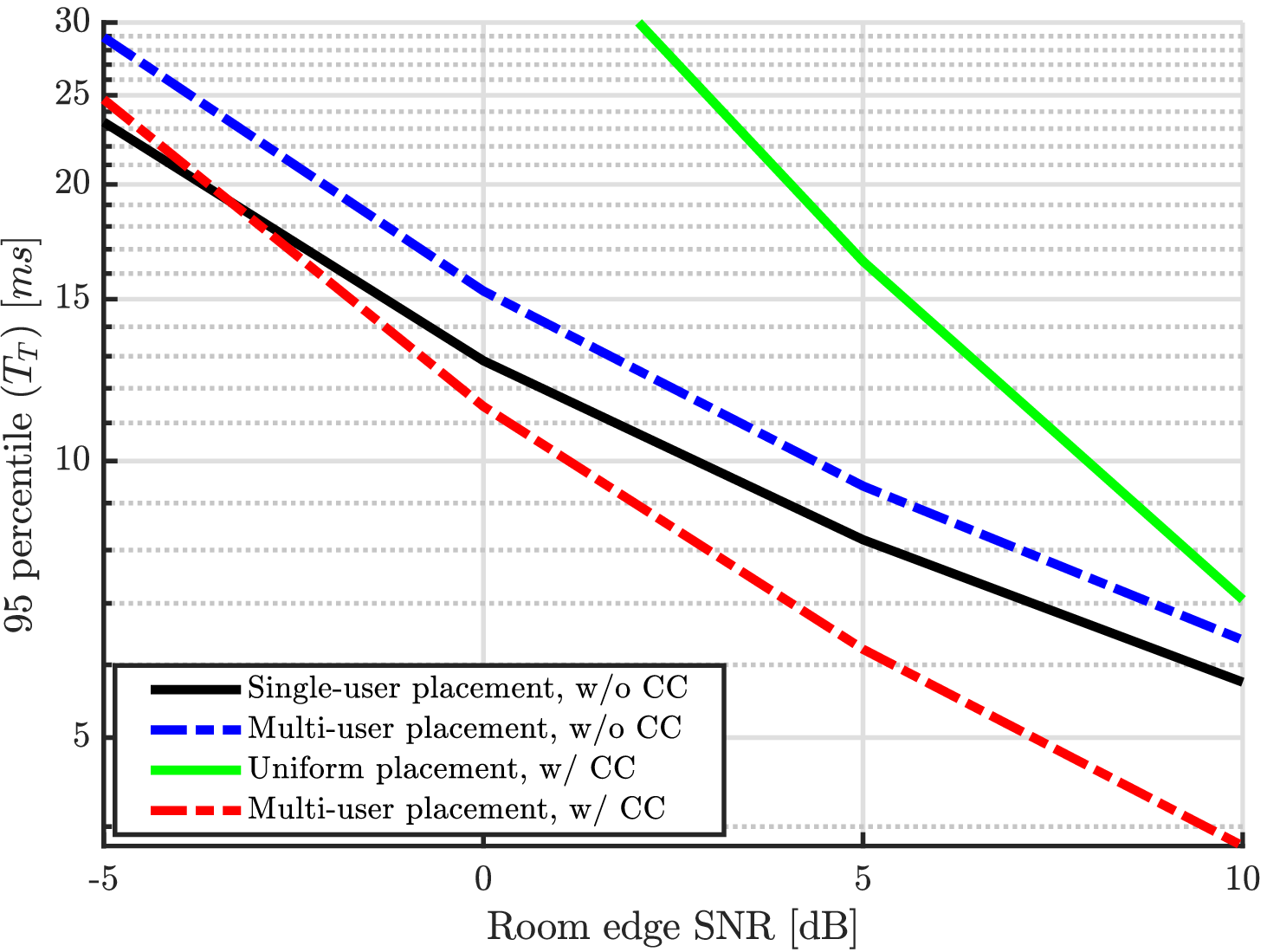}
\caption{Delivery time (logarithmic-scale) versus different room edge SNR,  where $K = 36, M/S = 0.33, \sigma=7$, and $L=6$.}
\label{fig:sim_power_95time}
\end{minipage}
\hspace{2mm}
\end{figure*}

Figure~\ref{fig:sim_sigma_95time} compares the performance of different schemes for various values of $\sigma$, which controls the attenuation intensity in different STUs. For small $\sigma$, the traditional uniform-placement method performs just as well as the proposed CC scheme with multi-user placement, and sometimes even better.
This is because, with small $\sigma$, the variation in large-scale fading among STUs is small, and hence, 
non-uniform memory allocation is unnecessary since it reduces the minimum achievable coded caching gain. However, the proposed scheme is more effective (outperforming all other schemes) for larger $\sigma$, as there are more attenuated STUs with significant rate differences to well-conditioned STUs. In fact, in these cases, the rate improvement for individual users outweighs the DoF loss caused by the memory allocation process (${K\hat{m}+L}$ vs ${K\frac{M}{S}+L}$ for the uniform placement case). 


Figure~\ref{fig:sim_power_95time} compares different scheme based on the SNR value at the room border. As depicted, the single-user cache placement scheme is the best option when the received SNR is very low. In this case, the achievable rate at different locations is highly diverse, making the local caching gain the most influential factor in reducing the overall delivery time. Conversely, when the transmit power is high enough to make all locations have similar achievable rates, coded caching gain is the primary factor in reducing the overall transmission time. Therefore, uniform placement is optimal as it maximizes the minimum achievable DoF. 

Finally, Figure~\ref{fig:sim_L_95time} compares different schemes for a different number of transmit antennas ($L$). Results show that when $L$ is large, the coded caching gain is less effective in improving overall performance since $L$ is the main contributor to the achievable DoF, i.e., $K\hat{m}+L$. Thus, location-dependent schemes with no CC techniques perform almost as well as the proposed method with multicast CC transmission. However, when $L$ is relatively small, the coded caching gain is crucial in reducing the transmission time, and the proposed scheme is much more effective than the single-user cache placement case. Figures~\ref{fig:sim_M_95time} and~\ref{fig:sim_K_95time} support a similar conclusion: Figure~\ref{fig:sim_M_95time} illustrates that the local caching gain is the most influential factor in reducing delivery time when the available memory is small. This is because the CC gain $K\hat{m}$ is much less than the number of transmit antennas ($L$) in such scenarios. 
Figure~\ref{fig:sim_K_95time} shows that the performance gap between the proposed method and the rest widens as the number of users increases due to higher achievable CC gain ($K\hat{m}$) for larger $K$.

\begin{figure*}[tb]
\begin{minipage}[c]{0.32\textwidth} 
\centering 
\setlength\abovecaptionskip{-0.25\baselineskip}\includegraphics[width=1\columnwidth,keepaspectratio]{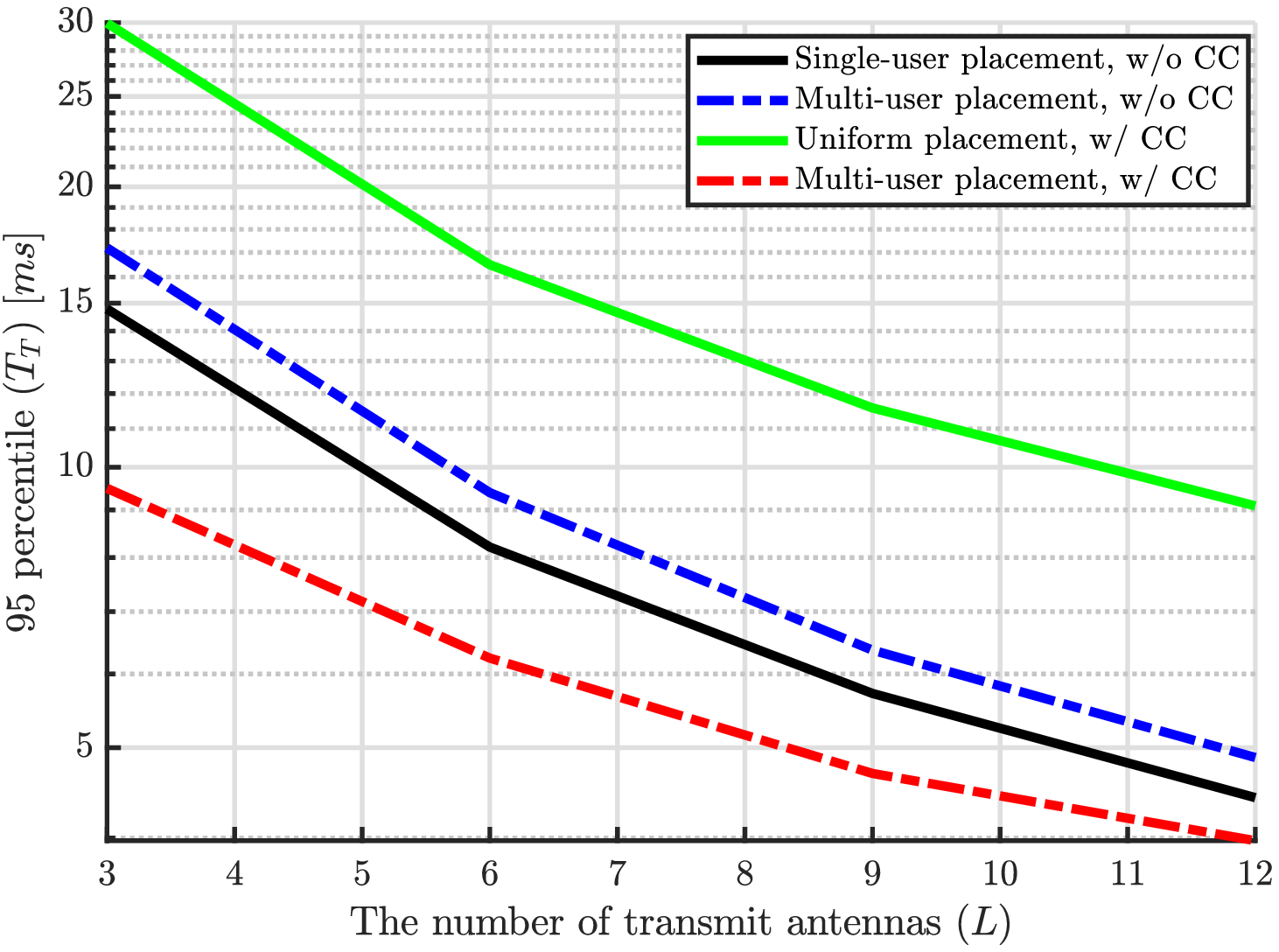}
\caption{Delivery time versus (Logarithmic-scale) the number of antennas ($L$), where $K = 36, M/S = 0.33$, and $\sigma=7$ [dB].}
\label{fig:sim_L_95time}
\end{minipage}
\hspace{1mm}
\begin{minipage}[c]{0.32\textwidth}
\centering 
\setlength\abovecaptionskip{-0.25\baselineskip}
\includegraphics[width=1\columnwidth,keepaspectratio]{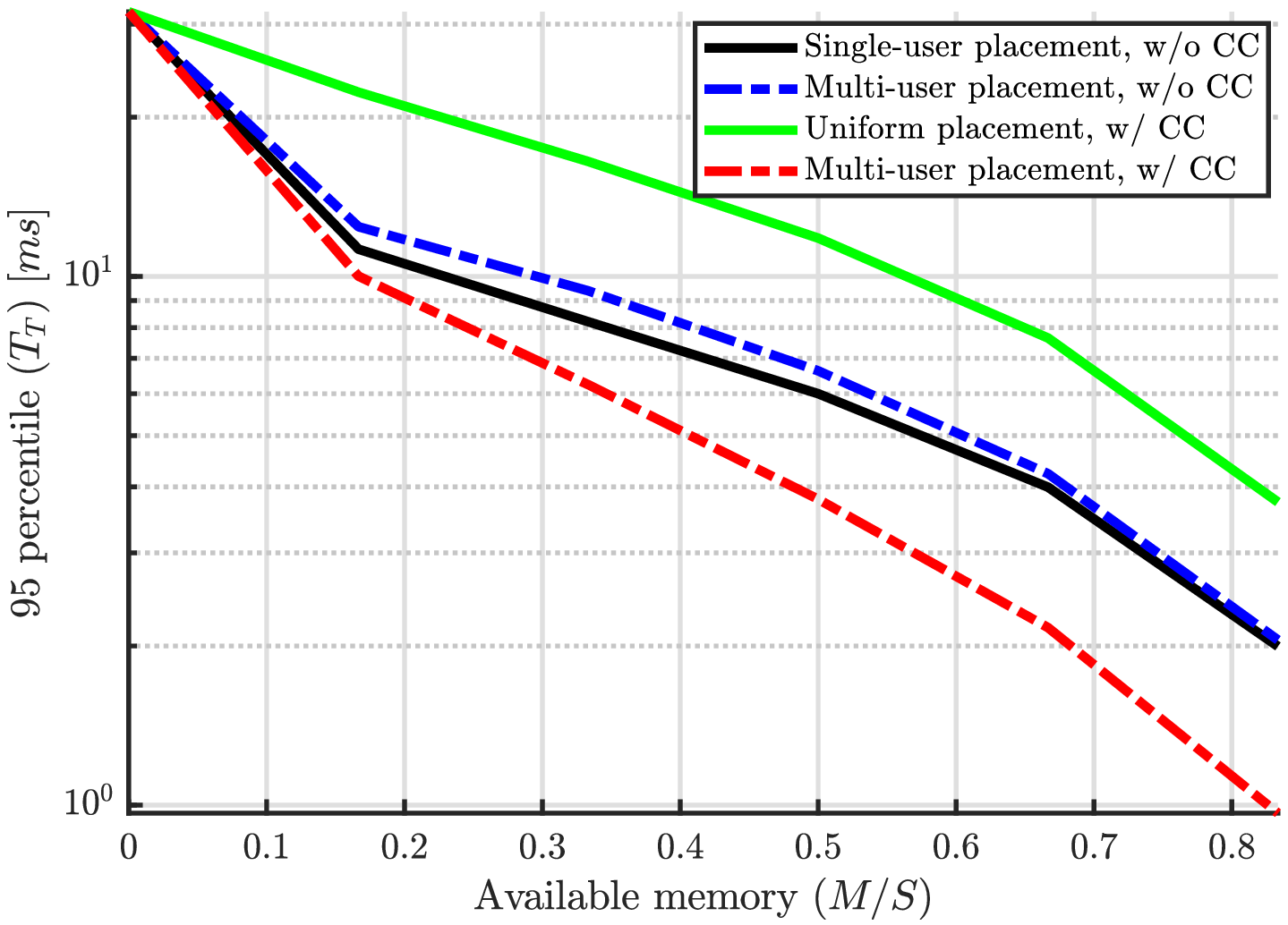}
\caption{Delivery time (logarithmic-scale) versus memory size ($M/S$), where $K = 36, \sigma=7$, and $L=6$.}
\label{fig:sim_M_95time}
\end{minipage}
\hspace{1mm}
\begin{minipage}[c]{0.32\textwidth}
\centering 
\setlength\abovecaptionskip{-0.25\baselineskip}
\includegraphics[width=1\columnwidth,keepaspectratio]{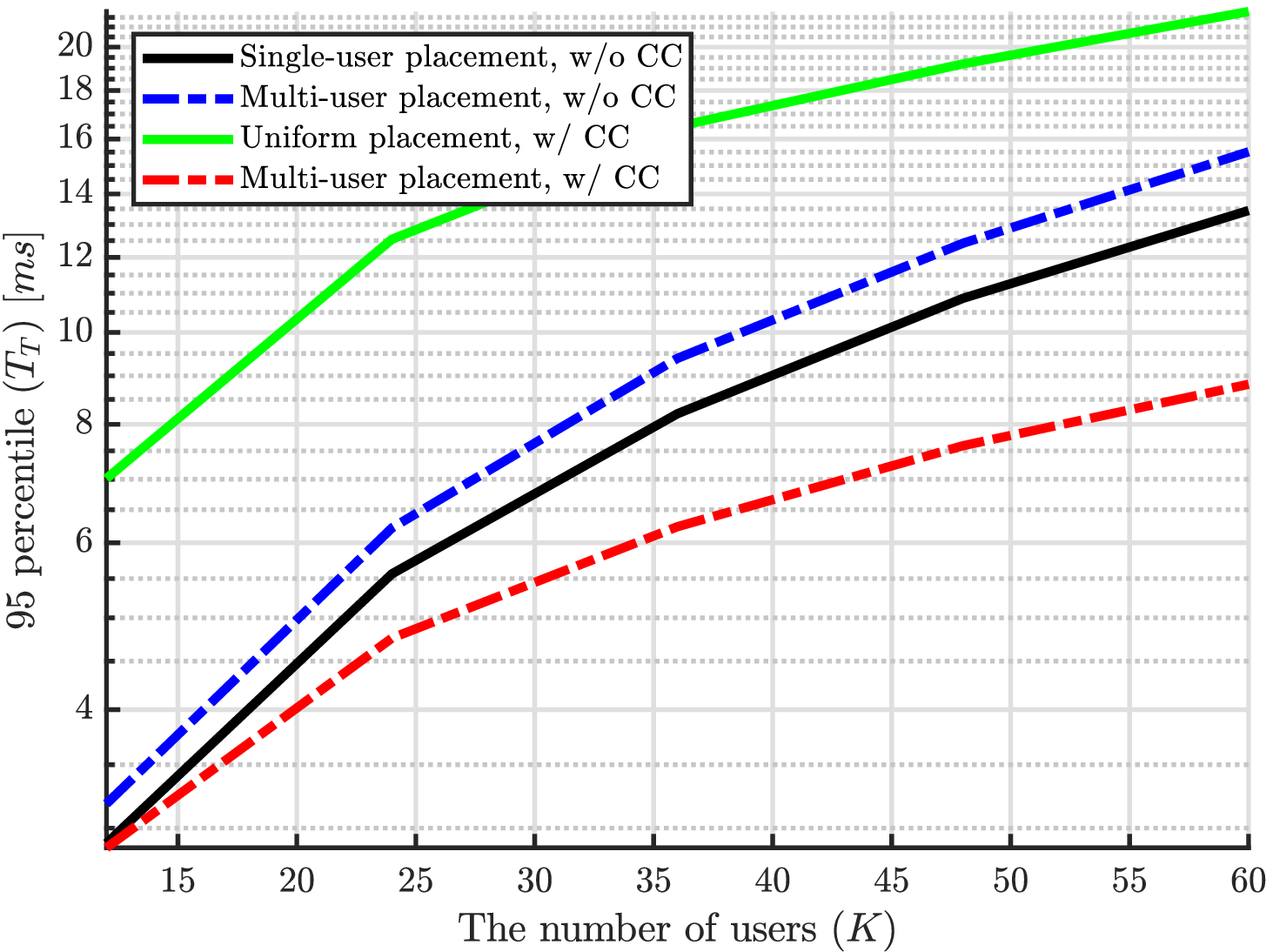}
\caption{Delivery time (logarithmic-scale) versus user count ($K$), where $M/S = 0.33, \sigma=7$, and $L=6$.}
\label{fig:sim_K_95time}
\end{minipage}
\end{figure*}

\section{Conclusion}
\label{sec:conclusions}
In this paper, we have proposed a cache placement and delivery scheme for location-dependent data requests suitable for future collaborative wireless XR applications. Our scheme mitigates excessive delivery times through an efficient memory allocation process. We allocate a large portion of memory to bottleneck areas by approximating a rate difference between various locations in the application hall. Due to its simple transceiver design, reduced subpacketization and processing requirements, and the ability to employ shared caching concepts, the proposed scheme can be easily implemented in large networks with many users utilizing a well-defined LAPDA structure. Numerical results demonstrated the superiority of the proposed approach in different scenarios, especially those with high channel variations and a large number of users, for which a bounded transmission time was ensured by minimizing the use of wireless resources in bottleneck areas. In the future, the proposed approach could be extended to include multiple transmitters, incorporate side information regarding user movements and STU transition probabilities, and examine more dynamic scenarios where users' cache content is updated as they navigate through the environment.

\appendices
\section{FM Matrix Formation} \label{sec: Appendix A}
Section~\ref{sec:delivery} introduces the mapping process from RIS $\CN_k$ to the TIS $\Check{\CN}_{k}$ to serve user $k$ appropriately. To facilitate this mapping, $K$ MF matrices $\BG_k$ are defined. The element $g_{i,j,k}$ in $\BG_k$ specifies the number of file fragments $\{W^{q}_{j}(s_k), j \in \CN_k, q \in [D_k]\}$  that should be substituted for the temporary index $i \in \Check{\CN}_k$. According to Algorithm~\ref{Alg:Delivery_main} for the delivery process, the content transmitted with temporary index $i \in \Check{\CN}_k$ during transmission $n \in [\hat{N}]$ must be cached by all users $k \in \overline{\CU}_i(n)$, where $\overline{\CU}_i(n) := \{k \ | \ k \in \CU(n), \hat{\BQ}(i,k) = *\}$. On the other hand, based on the cache placement described in section~\ref{sec:cache_placement}, sub-file $W_{j}(s)$ is cached by all users $k \in \CU_j(s)$, where $\CU_j(s) := \{k \ | \ k \in [K], \ \BQ_{s}(j,k) = *\}$. Hnece, the value of $g_{i,j,k}$ is determined as follows:
\begin{equation}
\begin{aligned}
    \begin{cases} \label{cond: feasibility}\nonumber
    g_{i,j,k} \geq 0,& \text{if } \overline{\CU}_i(n) \subseteq \CU_j (s_{k})  \\
   g_{i,j,k} = 0,              & \text{otherwise}
\end{cases}, \quad \Large{\substack{\forall k \in [K], \forall i \in \Check{\CN}_k , \\ \forall j \in \CN_k }} \; . 
\end{aligned}
\end{equation}
In cases where $\overline{\CU}_i(n) \nsubseteq \CU_j(s_k)$ for some $j \in \CN_k$ and $i \in \Check{\CN}_k$, it means that there is at least one user in the set $\overline{\CU}_i(n)$ who does not have $W^{q}{j}(s_k)$ in its cache. This violation of condition $C.4$ in Definition~\ref{MLPDA_definition_new} results in interference-limited transmission. Consequently, for such cases, no file fragment ${W^{q}_{j}(s_k)}$ should substituted for the delivery index $i \in \Check{\CN}_k$, i.e., $g_{i,j,k} = 0$.

Next, as discussed in section~\ref{sec:delivery}, the total number of file-fragments substituted for a temporary index $i \in \Check{\CN}_k$ must be equal to $\alpha$. Thus, the non-zero variables $g_{i,j,k}$ must satisfy the following
\begin{equation} \label{cond: total-delivery-chunk}
    \sum_{j \in \CN_k}g_{i,j,k} = \alpha, \quad  \forall k \in [K], \forall i \in \Check{\CN}_k. 
\end{equation}
Additionally, the total number of file fragments for each sub-file $W_j(s_k)$ is equal to $D_k$. Therefore, the total number of file fragments $\{W^{q}_{j}(s_k), j \in \CN_k\}$ substituted for different temporary indexes $\{i \, | \, \overline{\CU}_i(n) \subseteq \CU_j (s_{k}) , i \in \Check{\CN}_k\}$ should sum up to $D_k$. In other words,
\begin{equation} \label{cond: total-file-chunk}
    \sum_{i \in \Check{\CN}_k}g_{i,j,k} = D_k, \quad  \forall k \in [K], \forall j \in \CN_k.
\end{equation}
Consequently, to form each user-specific matrix $\BG_k$, the server needs to solve the non-zero variables $\{g_{i,j,k}\}$, satisfying $\hat{F}-\hat{Z}$ conditions in~\eqref{cond: total-delivery-chunk} and $F_{s_k}-Z_{s_k}$ conditions in~\eqref{cond: total-file-chunk}. Hence, to form the matrix $\BG_k$, the server needs to solve the following system of equations
\begin{equation}\label{eq: FM-equation}
    \BA_k\By = \Bb_k,
\end{equation}
where, $\BA_k \in \{0,1\}^{\varphi_k \times \psi_k}$ is the coefficient matrix, $\varphi_k = \hat{F}+F_{s_k}-(\hat{Z}+Z_{s_k})$ is the total number of conditions in~\eqref{cond: total-delivery-chunk} and~\eqref{cond: total-file-chunk}, $\psi_k$ is the total number of non-zero variables, $\By \in \mathbb{W}^{\psi_k}$ is the variable vector, and $\Bb_k \in \{\alpha , D_k\}^{\varphi_k}$ is the target vector. 
Based on each user state $s_k$, variable $\By$ in~\eqref{eq: FM-equation} is given in one of the following closed-form~\cite{Linear_Algebra}
\begin{enumerate}
    \item Over-determined case \big($Rank(\BA_k) = \varphi_k$\big) :
    \begin{equation} \nonumber
        \By = (\BA_k^{T}\BA_k)^{-1}\BA_k^{T}\Bb_k,
    \end{equation}
\item Under-determined case \big($Rank(\BA_k) = \psi_k$\big) :
    \begin{equation}\nonumber
        \By = \BA_k^{T}(\BA_k\BA_k^{T})^{-1}\Bb_k + (\BI - \BA_k^{T}(\BA_k\BA_k^{T})^{-1}\BA_k)\By_0,
    \end{equation} 
\end{enumerate}
In case $Rank(\BA_k) < \min\{\varphi_k, \psi_k\}$, there are several methods (e.g., singular value decomposition and rank decomposition~\cite{Linear_Algebra}) available to solve~\eqref{eq: FM-equation}, which are beyond the scope of this study. It is worth noting that Definition~\ref{LAPDA_definition} ensures that there is at least one $g_{i,j,k}>0$ for every $i \in \Check{\CN}_k$ in equation~\eqref{cond: total-delivery-chunk}, and similarly for every $j \in \CN_k$ in equation~\eqref{cond: total-file-chunk}. Consequently, these equalities remain valid at all times, and equation~\eqref{eq: FM-equation} has a solution that is not empty.  

\section{Non-integer Coded Caching gains}\label{sec:Appendix_non_int}
We follow a similar memory-sharing scheme as in~\cite{MaddahAli-2014} for non-integer \emph{coded caching} gains (i.e., in case $Km(s)$ is not  an integer). In this regard:
\begin{itemize}
    \item[1.] File $W(s)$ of state $s$ is first divided into two non-overlapping parts $W_1(s)$ and $W_2(s)$, where $|W_1(s)| = \left(\lfloor Km(s) \rfloor +1 - Km(s)\right)|W(s)|$ and $|W_2(s)| = \left(Km(s)-\lfloor Km(s) \rfloor\right)|W(s)|$.
    \item[2.] Two separate MLPDAs $\overline{\BQ}(s)$ and $\underline{\BQ}(s)$ are formed based on Definition~\ref{MLPDA_definition_new}, such that $\frac{\overline{Z}(s)}{\overline{F}(s)} = \overline{m}(s) = \frac{\lfloor Km(s) \rfloor + 1}{K}$ and $\frac{\underline{Z}(s)}{\underline{F}(s)} = \underline{m}(s)=\frac{\lfloor Km(s) \rfloor}{K}$.
    \item[3.] Each user caches the data part $W_1(S)$ based on $\underline{\BQ}(s)$, and data part $W_2(S)$ based on $\overline{\BQ}(s)$, according to the placement scheme in section~\ref{sec:cache_placement}.
\end{itemize}
It can be verified that the proposed memory-sharing process does not violate the cache constraint, i.e., $\frac{\lfloor Km(s) \rfloor}{K}\big(\lfloor Km(s) \rfloor +1 - Km(s)\big) + \frac{\lfloor Km(s) \rfloor+1}{K}\big(\lfloor Km(s) - Km(s) \rfloor\big)= \frac{Km(s)}{K} = m(s)$. 

In this case, the delivery will be done in two sub-phases based on time-sharing. In the first sub-phase, $\beta_1 = \big(\lfloor Km(s_{k^{*}}) \rfloor +1 - Km(s_{k^{*}})\big)$ portion of the files are delivered based on $\hat{\BQ}_1 \equiv \underline{\BQ}(s_{k^{*}})$, where $k^{*} = \min_{k \in \CK} m_k$. The remaining  $\beta_2 = \left(Km(s_{k^{*}})-\lfloor Km(s_{k^{*}}) \rfloor\right)$ portion of files are delivered in the second sub-phase based on $\hat{\BQ}_2 \equiv \overline{\BQ}(s_{k^{*}})$. We consider two cases for data delivery, 1) $\lfloor Km(s_k) \rfloor > \lfloor Km(s_{k^{*}}) \rfloor$ and 2) $\lfloor Km(s_k) \rfloor = \lfloor Km(s_{k^{*}}) \rfloor$.

\noindent\textbf{\textbf{Case 1 ($\mathbf{\lfloor Km(s_k) \rfloor > \lfloor Km(s_{k^{*}}}) \rfloor$):}} Since in this case, for all user $k$, the placement is done differently for $W_1(s_k)$ and $W_2(s_k)$, the index mapping process is also separately performed for $W_1(s_k)$ and $W_2(s_k)$. To this end, during the first delivery sub-phase, $\beta_1$ portion of every \textit{sub-file} of $W_1(s_k)$ is divided into $\underline{D}^1_k = \underline{\alpha}\frac{\underline{F}-\underline{Z}}{\underline{F}_{s_k}-\underline{Z}_{s_k}}$ smaller fragments, where $\underline{F}$ and $\underline{Z}$ are equivalent to $\underline{F}_{s_{k^{*}}}$ and $\underline{Z}_{s_{k^{*}}}$, respectively. Similarly, $\beta_1$ portion of every \textit{sub-file} of $W_2(s_k)$ is divided into  and $\underline{D}^2_k = \underline{\alpha}\frac{\underline{F}-\underline{Z}}{\overline{F}_{s_k}-\overline{Z}_{s_k}}$ smaller fragments. Then, based on $\hat{\BQ}_1, \underline{\BQ}(s_{k})$, and $\overline{\BQ}(s_{k})$, two file-fragment matrices $\underline{\BP}_k^1$ and $\underline{\BP}_k^2$ are formed for $W_1(s_k)$ and $W_2(s_k)$, respectively. Using $\underline{\BP}_k^1$ and $\underline{\BP}_k^2$, each transmitted message to user $k$ will carry $\underline{\alpha}$ file-fragments of $\beta_1$ portion of $W_1(s_k)$ and $\underline{\alpha}$ file-fragments of $\beta_1$ portion of $W_2(s_k)$. The remaining $\beta_2$ portion of $W_1(s_k)$ and $W_2(s_k)$ will be delivered in the second sub-phase, using $\hat{\BQ}_2, \underline{\BQ}(s_{k})$, and $\overline{\BQ}(s_{k})$ to form $\overline{\BP}_k^1$ and $\overline{\BP}_k^2$. For the sake of brevity, we avoid reviewing a similar process in the second delivery sub-phase.

\noindent\textbf{\textbf{Case 2 ($\mathbf{\lfloor Km(s_k) \rfloor = \lfloor Km(s_{k^{*}}}) \rfloor$):}} In this case, $\big(\lfloor Km(s_k) \rfloor +1 - Km(s_k)\big)$ portion of $W(s_k)$ (i.e., $W_1(s_k)$) is already cached based on $\hat{\BQ}_1$ and can be easily delivered by forming $\underline{\BP}_k^1$. The remaining $\beta_1 - \big(\lfloor Km(s_k) \rfloor +1 - Km(s_k)\big)$ portion of $W(s_k)$ can also be delivered in the first delivery sub-phase based on $\hat{\BQ}_1$ and $\overline{\BQ}(s_{k})$. In this regard, first $\beta_1 - \big(\lfloor Km(s_k) \rfloor +1 - Km(s_k)\big)$ portion of of every \textit{sub-file} of $W_2(s_k)$ is divided into and $\underline{D}^2_k = \underline{\alpha}\frac{\underline{F}-\underline{Z}}{\overline{F}_{s_k}-\overline{Z}_{s_k}}$ smaller fragments. Then, using $\underline{\BP}_k^1$ and $\underline{\BP}_k^2$, each transmitted message to user $k$ will carry $\underline{\alpha}$ file-fragments of $W_1(s_k)$ and $\underline{\alpha}$ file-fragments of $\beta_1 - \big(\lfloor Km(s_k) \rfloor +1 - Km(s_k)\big)$ portion of $W_2(s_k)$. In the second sub-phase, the remaining portion of $W(s_k)$, i.e., $\Big[\big(Km(s_k)-\lfloor Km(s_k) \rfloor\big)-\Big(\beta_1 - \big(\lfloor Km(s_k) \rfloor +1 - Km(s_k)\big)\Big)\Big] = 1-\beta_1 = \beta_2$ portion of $W_2(s_k)$, will be delivered based on $\hat{\BQ}_2$ and $\overline{\BQ}(s_{k})$.


\begin{thebibliography}{10}
\providecommand{\url}[1]{#1}
\csname url@samestyle\endcsname
\providecommand{\newblock}{\relax}
\providecommand{\bibinfo}[2]{#2}
\providecommand{\BIBentrySTDinterwordspacing}{\spaceskip=0pt\relax}
\providecommand{\BIBentryALTinterwordstretchfactor}{4}
\providecommand{\BIBentryALTinterwordspacing}{\spaceskip=\fontdimen2\font plus
\BIBentryALTinterwordstretchfactor\fontdimen3\font minus
  \fontdimen4\font\relax}
\providecommand{\BIBforeignlanguage}[2]{{%
\expandafter\ifx\csname l@#1\endcsname\relax
\typeout{** WARNING: IEEEtran.bst: No hyphenation pattern has been}%
\typeout{** loaded for the language `#1'. Using the pattern for}%
\typeout{** the default language instead.}%
\else
\language=\csname l@#1\endcsname
\fi
#2}}
\providecommand{\BIBdecl}{\relax}
\BIBdecl

\bibitem{cisco2020}
Cisco, ``{Cisco Annual Internet Report, 2018--2023},'' \emph{White Paper},
  vol.~1, march, 2020.

\bibitem{salehi2022enhancing}
M.~Salehi, K.~Hooli, J.~Hulkkonen, and A.~T{\"o}lli, ``Enhancing
  next-generation extended reality applications with coded caching,''
  \emph{IEEE Open Journal of the Communications Society}, 2023.

\bibitem{flashback_2016_VR_static_dynamic_support}
K.~Boos, D.~Chu, and E.~Cuervo, ``{Flashback: Immersive virtual reality on
  mobile devices via rendering memoization},'' in \emph{Proceedings of the 14th
  Annual International Conference on Mobile Systems, Applications, and
  Services}, 2016, pp. 291--304.

\bibitem{Nokia-immersive}
\BIBentryALTinterwordspacing
E.~Thomas, E.~Potetsianakis, T.~Stockhammer, I.~Bouazizi, and M.-L. Champel,
  ``{MPEG Media Enablers For Richer XR Experiences},'' \emph{arXiv}, 2020.
  [Online]. Available: \url{https://arxiv.org/abs/2010.04645.}
\BIBentrySTDinterwordspacing

\bibitem{6G_white_paper_2020}
N.~Rajatheva, I.~Atzeni, E.~Bjornson, A.~Bourdoux, S.~Buzzi, J.-B. Dore,
  S.~Erkucuk, M.~Fuentes, K.~Guan, Y.~Hu \emph{et~al.}, ``White paper on
  broadband connectivity in {6G},'' \emph{arXiv preprint arXiv:2004.14247},
  2020.

\bibitem{bastug2017toward}
E.~Bastug, M.~Bennis, M.~M{\'{e}}dard, and M.~Debbah, ``{Toward interconnected
  virtual reality: Opportunities, challenges, and enablers},'' \emph{IEEE
  Communications Magazine}, vol.~55, no.~6, pp. 110--117, 2017.

\bibitem{taleb2022_towards_XR_vision}
T.~Taleb, A.~Boudi, L.~Rosa, L.~Cordeiro, T.~Theodoropoulos, K.~Tserpes,
  P.~Dazzi, A.~Protopsaltis, and R.~Li, ``{Towards supporting {XR} services:
  Architecture and enablers},'' \emph{IEEE Internet of Things Journal}, 2022.

\bibitem{walid_sad_bennis_VR_XR_2022}
C.~Chaccour, M.~N. Soorki, W.~Saad, M.~Bennis, and P.~Popovski, ``Can terahertz
  provide high-rate reliable low latency communications for wireless vr?''
  \emph{IEEE Internet of Things Journal}, 2022.

\bibitem{wireless_virtual_reality_TCOM2018}
M.~Chen, W.~Saad, and C.~Yin, ``Virtual reality over wireless networks:
  Quality-of-service model and learning-based resource management,'' \emph{IEEE
  Transactions on Communications}, vol.~66, no.~11, pp. 5621--5635, 2018.

\bibitem{URLLC_ICC_2017}
G.~Pocovi, B.~Soret, K.~I. Pedersen, and P.~Mogensen, ``{MAC} layer
  enhancements for ultra-reliable low-latency communications in cellular
  networks,'' in \emph{2017 IEEE International Conference on Communications
  Workshops (ICC Workshops)}.\hskip 1em plus 0.5em minus 0.4em\relax IEEE,
  2017, pp. 1005--1010.

\bibitem{proactive_caching_Cm2016}
H.~Liu, Z.~Chen, and L.~Qian, ``The three primary colors of mobile systems,''
  \emph{IEEE Comm. Mag.}, vol.~54, no.~9, pp. 15--21, 2016.

\bibitem{role_of_caching_in_future_wireless_caire_JSAC2018}
G.~S. Paschos, G.~Iosifidis, M.~Tao, D.~Towsley, and G.~Caire, ``The role of
  caching in future communication systems and networks,'' \emph{IEEE Journal on
  Selected Areas in Communications}, vol.~36, no.~6, pp. 1111--1125, 2018.

\bibitem{sun2019communications}
Y.~Sun, Z.~Chen, M.~Tao, and H.~Liu, ``{Communications, caching, and computing
  for mobile virtual reality: Modeling and tradeoff},'' \emph{IEEE Transactions
  on Communications}, vol.~67, no.~11, pp. 7573--7586, 2019.

\bibitem{yang2018communication}
X.~Yang, Z.~Chen, K.~Li, Y.~Sun, N.~Liu, W.~Xie, and Y.~Zhao,
  ``Communication-constrained mobile edge computing systems for wireless
  virtual reality: Scheduling and tradeoff,'' \emph{IEEE Access}, vol.~6, pp.
  16\,665--16\,677, 2018.

\bibitem{sun2020bandwidth}
Y.~Sun, Z.~Chen, M.~Tao, and H.~Liu, ``Bandwidth gain from mobile edge
  computing and caching in wireless multicast systems,'' \emph{IEEE Trans. on
  Wireless Comm.}, vol.~19, no.~6, pp. 3992--4007, 2020.

\bibitem{dang2019joint}
T.~Dang and M.~Peng, ``Joint radio communication, caching, and computing design
  for mobile virtual reality delivery in fog radio access networks,''
  \emph{IEEE Journal on Selected Areas in Communications}, vol.~37, no.~7, pp.
  1594--1607, 2019.

\bibitem{bastug2014living}
E.~Bastug, M.~Bennis, and M.~Debbah, ``{Living on the edge: The role of
  proactive caching in 5G wireless networks},'' \emph{IEEE Communications
  Magazine}, vol.~52, no.~8, pp. 82--89, 2014.

\bibitem{yang2015analysis}
C.~Yang, Y.~Yao, Z.~Chen, and B.~Xia, ``Analysis on cache-enabled wireless
  heterogeneous networks,'' \emph{IEEE Transactions on Wireless
  Communications}, vol.~15, no.~1, pp. 131--145, 2015.

\bibitem{MaddahAli-2014}
M.~A. Maddah-Ali and U.~Niesen, ``Fundamental limits of caching,'' \emph{{IEEE}
  Trans. Inform. Theory}, vol.~60, no.~5, pp. 2856--2867, May 2014.

\bibitem{CC_edge_computing_for_VR_twc2021}
Y.~Li, Z.~Chen, and M.~Tao, ``Coded caching with device computing in mobile
  edge computing systems,'' \emph{IEEE Transactions on Wireless
  Communications}, 2021.

\bibitem{Mahmoodi_immersive_isit2021}
H.~B. Mahmoodi, M.~Salehi, and A.~Tölli, ``Non-symmetric coded caching for
  location-dependent content delivery,'' in \emph{2021 IEEE International
  Symposium on Information Theory (ISIT)}, 2021, pp. 712--717.

\bibitem{mahmoodi2022asymmetric}
H.~B. Mahmoodi, M.~J. Salehi, and A.~T{\"o}lli, ``Asymmetric multi-antenna
  coded caching for location-dependent content delivery,'' in \emph{GLOBECOM
  2022-2022 IEEE Glob. Comm. Conf.}, 2022, pp. 1930--1935.

\bibitem{mahmoodi_ICC2022_nonsym}
H.~B. Mahmoodi, M.~Salehi, and A.~T{\"o}lli, ``Non-symmetric multi-antenna
  coded caching for location-dependent content delivery,'' in \emph{ICC
  2022-IEEE International Conference on Communications}.\hskip 1em plus 0.5em
  minus 0.4em\relax IEEE, 2022, pp. 5165--5170.

\bibitem{mahmoodi2022asymmetric_arxiv_TWC2023}
H.~B. Mahmoodi, M.~Salehi, and A.~Tölli, ``Multi-antenna coded caching for
  location-dependent content delivery,'' \emph{IEEE Transactions on Wireless
  Communications}, pp. 1--1, 2023.

\bibitem{pooya-cc-physical-2019-journal}
S.~P. {Shariatpanahi}, G.~{Caire}, and B.~{Hossein Khalaj}, ``Physical-layer
  schemes for wireless coded caching,'' \emph{{IEEE} Trans. Inform. Theory},
  vol.~65, no.~5, pp. 2792--2807, 2019.

\bibitem{Shariatpanahi2016}
S.~P. Shariatpanahi \emph{et~al.}, ``Multi-server coded caching,'' \emph{{IEEE}
  Trans. Inform. Theory}, vol.~62, no.~12, pp. 7253--7271, Dec 2016.

\bibitem{yu2017exact}
Q.~Yu, M.~A. Maddah-Ali, and A.~S. Avestimehr, ``{The exact rate-memory
  tradeoff for caching with uncoded prefetching},'' \emph{IEEE International
  Symposium on Information Theory - Proceedings}, vol.~64, no.~2, pp.
  1613--1617, 2017.

\bibitem{tolli2017multi}
A.~T{\"{o}}lli, S.~P. Shariatpanahi, J.~Kaleva, and B.~H. Khalaj,
  ``{Multi-antenna interference management for coded caching},'' \emph{IEEE
  Transactions on Wireless Communications}, vol.~19, no.~3, pp. 2091--2106,
  2020.

\bibitem{Ji2016}
M.~Ji, G.~Caire, and A.~F. Molisch, ``Fundamental limits of caching in wireless
  {D2D} networks,'' \emph{{IEEE} Trans. Inform. Theory}, vol.~62, no.~2, pp.
  849--869, Feb 2016.

\bibitem{D2D-CC-Optload-memtradeof-caire-2019}
C.~{Yapar}, K.~{Wan}, R.~F. {Schaefer}, and G.~{Caire}, ``On the optimality of
  {D2D} coded caching with uncoded cache placement and one-shot delivery,''
  \emph{{IEEE} Trans. Commun.}, vol.~67, no.~12, pp. 8179--8192, 2019.

\bibitem{D2D_CC_mahmoodi_2022}
H.~B. Mahmoodi, J.~Kaleva, S.~P. Shariatpanahi, and A.~Tölli, ``{D2D} assisted
  multi-antenna coded caching,'' \emph{IEEE Access}, vol.~11, pp.
  16\,271--16\,287, 2023.

\bibitem{PDA_first_2017}
Q.~Yan, M.~Cheng, X.~Tang, and Q.~Chen, ``On the placement delivery array
  design for centralized coded caching scheme,'' \emph{IEEE Transactions on
  Information Theory}, vol.~63, no.~9, pp. 5821--5833, 2017.

\bibitem{cheng_generalized_PDA_2019}
M.~Cheng, J.~Jiang, Q.~Wang, and Y.~Yao, ``A generalized grouping scheme in
  coded caching,'' \emph{IEEE Transactions on Communications}, vol.~67, no.~5,
  pp. 3422--3430, 2019.

\bibitem{DPDA2019}
J.~{Wang}, M.~{Cheng}, Q.~{Yan}, and X.~{Tang}, ``Placement delivery array
  design for coded caching scheme in {D2D} networks,'' \emph{{IEEE} Trans.
  Commun.}, vol.~67, no.~5, pp. 3388--3395, May 2019.

\bibitem{cheng_centralized_PDA_framework_2021}
M.~Cheng, J.~Wang, X.~Zhong, and Q.~Wang, ``A framework of constructing
  placement delivery arrays for centralized coded caching,'' \emph{IEEE
  Transactions on Information Theory}, vol.~67, no.~11, pp. 7121--7131, 2021.

\bibitem{lampiris2018adding}
E.~Lampiris and P.~Elia, ``{Adding transmitters dramatically boosts
  coded-caching gains for finite file sizes},'' \emph{IEEE Journal on Selected
  Areas in Communications}, vol.~36, no.~6, pp. 1176--1188, 2018.

\bibitem{Shahred_cache_Emanuele_2020}
E.~Parrinello, A.~Ünsal, and P.~Elia, ``Fundamental limits of coded caching
  with multiple antennas, shared caches and uncoded prefetching,'' \emph{IEEE
  Transactions on Information Theory}, vol.~66, no.~4, pp. 2252--2268, 2020.

\bibitem{Shared_cache_decentrlized_2021}
M.~Dutta and A.~Thomas, ``Decentralized coded caching for shared caches,''
  \emph{IEEE Comm. Letters}, vol.~25, no.~5, pp. 1458--1462, 2021.

\bibitem{salehi2021MIMO}
M.~J. Salehi, H.~B. Mahmoodi, and A.~T{\"{o}}lli, ``{A Low-Subpacketization
  High-Performance MIMO Coded Caching Scheme},'' in \emph{WSA 2021 - 25th
  International ITG Workshop on Smart Antennas}, 2021, pp. 427--432.

\bibitem{salehi2023multicast}
M.~Salehi, M.~NaseriTehrani, and A.~T{\"o}lli, ``Multicast beamformer design
  for {MIMO} coded caching systems,'' in \emph{ICASSP 2023-2023 IEEE
  International Conference on Acoustics, Speech and Signal Processing
  (ICASSP)}.\hskip 1em plus 0.5em minus 0.4em\relax IEEE, 2023, pp. 1--5.

\bibitem{abolpour2022coded}
M.~Abolpour, M.~J. Salehi, and A.~Tolli, ``{Coded Caching and Spatial
  Multiplexing Gain Trade-off in Dynamic MISO Networks},'' \emph{IEEE Workshop
  on Signal Processing Advances in Wireless Communications, SPAWC}, vol.
  2022-July, 2022.

\bibitem{abolpour2023cache}
M.~Abolpour, M.~Salehi, and A.~T{\"o}lli, ``Cache-aided communications in
  {MISO} networks with dynamic user behavior: A universal solution,'' in
  \emph{2023 IEEE Int. Symp. on Inf. Theory (ISIT), available
  at:arXiv:2304.11623}, 2023.

\bibitem{salehi2021lowsubpacketization}
M.~Salehi, E.~Parrinello, H.~B. Mahmoodi, and A.~T{\"o}lli,
  ``Low-subpacketization multi-antenna coded caching for dynamic networks,'' in
  \emph{2022 Joint European Conference on Networks and Communications \& 6G
  Summit (EuCNC/6G Summit)}.\hskip 1em plus 0.5em minus 0.4em\relax IEEE, 2022,
  pp. 112--117.

\bibitem{MLPDA_ISIT2022}
T.~Yang, K.~Wan, M.~Cheng, R.~C. Qiu, and G.~Caire, ``Multiple-antenna
  placement delivery array for cache-aided miso systems,'' \emph{IEEE
  Transactions on Information Theory}, 2023.

\bibitem{salehi2019subpacketization}
M.~Salehi, A.~Tolli, S.~P. Shariatpanahi, and J.~Kaleva,
  ``{Subpacketization-rate trade-off in multi-antenna coded caching},'' in
  \emph{2019 IEEE Global Communications Conference, GLOBECOM 2019 -
  Proceedings}.\hskip 1em plus 0.5em minus 0.4em\relax IEEE, 2019, pp. 1--6.

\bibitem{salehi2022multi}
M.~Salehi and A.~T{\"{o}}lli, ``{Multi-antenna Coded Caching at Finite-SNR:
  Breaking Down the Gain Structure},'' \emph{arXiv preprint arXiv:2210.10433},
  2022.

\bibitem{zhao_petros_MU_MISO_near_far_issue_ITW2021}
H.~Zhao, A.~Bazco-Nogueras, and P.~Elia, ``Resolving the worst-user bottleneck
  of coded caching: Exploiting finite file sizes,'' in \emph{2020 IEEE
  Information Theory Workshop (ITW)}.\hskip 1em plus 0.5em minus 0.4em\relax
  IEEE, 2021, pp. 1--5.

\bibitem{destounis2020adaptive}
A.~Destounis, A.~Ghorbel, G.~S. Paschos, and M.~Kobayashi, ``{Adaptive Coded
  Caching for Fair Delivery over Fading Channels},'' \emph{IEEE Transactions on
  Information Theory}, 2020.

\bibitem{Coded_caching_for_stochastic_wireless_network_TWC2021}
Y.~Gu, C.~Yang, B.~Xia, and D.~Xu, ``Design and analysis of coded caching
  schemes in stochastic wireless networks,'' \emph{IEEE Transactions on
  Wireless Communications}, 2021.

\bibitem{liu_joint_power_energi_cc_TWC2021}
Y.~Liu, A.~Tang, and X.~Wang, ``Joint scheduling and power optimization for
  delay constrained transmissions in coded caching over wireless fading
  channels,'' \emph{IEEE Transactions on Wireless Communications}, 2021.

\bibitem{salehi2020coded}
M.~Salehi, A.~Tolli, and S.~P. Shariatpanahi, ``Coded caching with uneven
  channels: A quality of experience approach,'' in \emph{2020 IEEE Int.
  Workshop on Sig. Proc. Advances in Wireless Comm. (SPAWC)}, 2020, pp. 1--5.

\bibitem{maddah2014fundamental}
M.~A. Maddah-Ali and U.~Niesen, ``{Fundamental limits of caching},'' \emph{IEEE
  Trans. on Inf. Theory}, vol.~60, no.~5, pp. 2856--2867, 2014.

\bibitem{salehi2020lowcomplexity}
M.~Salehi, E.~Parrinello, S.~P. Shariatpanahi, P.~Elia, and A.~Tölli,
  ``Low-complexity high-performance cyclic caching for large {MISO} systems,''
  \emph{IEEE Transactions on Wireless Communications}, vol.~21, no.~5, pp.
  3263--3278, 2022.

\bibitem{tang2017coded}
A.~Tang, S.~Roy, and X.~Wang, ``{Coded caching for wireless backhaul networks
  with unequal link rates},'' \emph{IEEE Transactions on Communications},
  vol.~66, no.~1, pp. 1--13, 2017.

\bibitem{zhao_wireless_CC_ring_area_ISIT2021}
H.~Zhao, A.~Bazco-Nogueras, and P.~Elia, ``Wireless coded caching with shared
  caches can overcome the near-far bottleneck,'' in \emph{2021 IEEE Int. Symp.
  on Inf. Theory (ISIT)}, 2021, pp. 350--355.

\bibitem{zhao_petros_near_far_Nakagami_asilomar2021}
------, ``Coded caching gains at low {SNR} over nakagami fading channels,”,''
  in \emph{Proc. 55th Asilomar Conf. Signals, Syst., Comput.(ACSSC)}, 2021.

\bibitem{parrinello2019fundamental}
E.~Parrinello, A.~{\"{U}}nsal, and P.~Elia, ``{Fundamental Limits of Coded
  Caching with Multiple Antennas, Shared Caches and Uncoded Prefetching},''
  \emph{IEEE Transactions on Information Theory}, 2019.

\bibitem{Linear_Algebra}
H.~Anton, \emph{Elementary Linear Algebra}.\hskip 1em plus 0.5em minus
  0.4em\relax New York, USA: John Wiley, 1987.

\end{thebibliography}
\end{document}